\documentclass[letterpaper]{article} %
\usepackage{aaai25}  %
\usepackage{times}  %
\usepackage{helvet}  %
\usepackage{courier}  %
\usepackage[hyphens]{url}  %
\usepackage{graphicx} %
\urlstyle{rm} %
\usepackage{natbib}  %
\usepackage{caption} %
\frenchspacing  %
\setlength{\pdfpagewidth}{8.5in} %
\setlength{\pdfpageheight}{11in} %
\usepackage{algorithm}
\usepackage{algorithmic}

\usepackage{newfloat}
\usepackage{listings}
\DeclareCaptionStyle{ruled}{labelfont=normalfont,labelsep=colon,strut=off} %
\lstset{%
	basicstyle={\footnotesize\ttfamily},%
	numbers=left,numberstyle=\footnotesize,xleftmargin=2em,%
	aboveskip=0pt,belowskip=0pt,%
	showstringspaces=false,tabsize=2,breaklines=true}
\floatstyle{ruled}
\newfloat{listing}{tb}{lst}{}
\floatname{listing}{Listing}
\pdfinfo{
/TemplateVersion (2025.1)
}

\usepackage{tikz}
\usetikzlibrary{matrix,calc,positioning,decorations.pathreplacing}

\usepackage[shortlabels]{enumitem}
\usepackage{booktabs}

\usepackage{xcolor}
\definecolor{navy}{rgb}{0.0, 0.3, 0.7}
\definecolor{softred}{rgb}{0.8, 0.3, 0.3}
\definecolor{gocolor}{rgb}{0, 0.8, 0}
\definecolor{waitcolor}{rgb}{0.9, 0.6, 0}

\usepackage{pgfplots}

\newcommand{\abs}[1]{\left|#1\right|}

\usepackage{poag}

\usepackage{amsmath,amssymb,latexsym, amsfonts, amscd, amsthm, epsfig, subfigure, thmtools, comment}

\newtheorem{theorem}{Theorem}
\newtheorem{proposition}[theorem]{Proposition}
\newtheorem{lemma}[theorem]{Lemma}
\newtheorem{corollary}[theorem]{Corollary}
\theoremstyle{definition}
\newtheorem{defn}[theorem]{Definition}

\declaretheoremstyle[
  spaceabove=6pt, spacebelow=6pt,
  headfont=\normalfont\bfseries,
  notefont=\mdseries, notebraces={(}{)},
  bodyfont=\normalfont,
  postheadspace=1em,
  qed=\(\lozenge\),
  headpunct={.}
]{examplestyle}

\declaretheorem[
  style=examplestyle,
  name=Example,
  numberlike=theorem
]{example}

\newcommand{\continuation}{}

\numberwithin{theorem}{section}
\numberwithin{proposition}{section}
\numberwithin{lemma}{section}
\numberwithin{corollary}{section}
\numberwithin{defn}{section}
\numberwithin{example}{section}

\usepackage{cleveref}
\usepackage{tabularx}
\usepackage{nicefrac}
\usepackage{diagbox}

\crefname{asm}{Assumption}{Assumption}
\crefname{defn}{Definition}{Definition}

\setcounter{secnumdepth}{2} %

\title{The Partially Observable Off-Switch Game}
\author {
    Andrew Garber\thanks{These authors contributed equally.\\ Correspondence to: Andrew Garber \texttt<andrewg4000@gmail.com\texttt>, Scott Emmons \texttt<emmons@berkeley.edu\texttt>.},
    Rohan Subramani\footnotemark[1],
    Linus Luu\footnotemark[1],\\
    Mark Bedaywi,
    Stuart Russell,
    Scott Emmons
}
\affiliations {
    Center for Human-Compatible AI
}

\pgfplotsset{compat=1.18}

\nocopyright %
\begin{document}

\maketitle

\begin{abstract}
A wide variety of goals could cause an AI to disable its off switch because ``you can't fetch the coffee if you're dead'' \citep{russell2019human}. Prior theoretical work on this \textit{shutdown problem} assumes that humans know everything that AIs do. In practice, however, humans have only limited information. Moreover, in many of the settings where the shutdown problem is most concerning, AIs might have vast amounts of private information. To capture these differences in knowledge, we introduce the Partially Observable Off-Switch Game (PO-OSG), a game-theoretic model of the shutdown problem with asymmetric information. Unlike when the human has full observability, we find that in optimal play, \textit{even AI agents assisting perfectly rational humans sometimes avoid shutdown}. As expected, increasing the amount of communication or information available always increases (or leaves unchanged) the agents' expected common payoff. But counterintuitively, introducing bounded communication can make the AI defer to the human \textit{less} in optimal play even though communication mitigates information asymmetry. In particular, communication sometimes enables new optimal behavior requiring strategic AI deference to achieve outcomes that were previously inaccessible. Thus, designing safe artificial agents in the presence of asymmetric information requires careful consideration of the tradeoffs between maximizing payoffs (potentially myopically) and maintaining AIs' incentives to defer to humans.
\end{abstract}

\section{Introduction} \label{sec:introduction}

Advanced AI systems with a variety of goals might avoid being shut down because ``you can’t fetch the coffee if you're dead’’. Being shut off would likely prevent AI systems from achieving their goals, no matter what those goals are \citep{omohundro2008drives, russell2019human}.  Thus, we must take care when designing AI systems to ensure they are \textit{corrigible}, i.e., that they allow humans to modify or turn them off in order to prevent harmful behaviors \citep{soares2015corrigibility}.

\begin{figure}[ht]
    \begin{minipage}[b]{0.35\textwidth}
        \centering
        \resizebox{\textwidth}{!}{%
            \small
\begin{tikzpicture}[node distance=.8cm and .8cm, every node/.style={align=center}]

\node (s) {$S$};

\node (oh1) [below left=of s] {$\hobv$};
\node (or1) [below right=of s] {$\robv$};
\node (ar) [below=of or1] {$\raxn$};
\node (ra) [below=of ar] {$\actutil(S)$};
\node (roff) [below right=of ar, xshift=-15pt] {$\offutil(S)$};

\node (ah) [left=of ra] {$\haxn$};

\draw[->] (s) -- node[midway, above left, text=navy] {$\hobsdist$} (oh1);
\draw[->] (s) -- node[midway, above right, text=navy] {$\robsdist$} (or1);
\draw[->] (or1) -- node[midway, right, text=navy] {$\polr$} (ar);
\draw[->] (ar) -- node[midway, above left, xshift=5pt, text=softred] {$\wait$} (ah);
\draw[->] (ar) -- node[midway, left, softred] {$a$} (ra);
\draw[->] (ar) -- node[midway, above right, softred] {$\off$} (roff);
\draw[->] (oh1) -- node[midway, left, text=navy] {$\polh$} (ah);

\node (ha) [below left=of ah] {$\actutil(S)$};
\node (hoff) [below right=of ah] {$\offutil(S)$};

\draw[->] (ah) -- node[midway, left, yshift=3pt, text=softred] {$\on$} (ha);
\draw[->] (ah) -- node[midway, right, yshift=3pt,text=softred] {$\off$} (hoff);

\end{tikzpicture}
        }
    \end{minipage}
    \centering
        \caption{The basic setup of a Partially Observable Off-Switch Game (PO-OSG). A state is selected randomly and the human $\hum$ and AI assistant $\rob$ receive (possibly dependent) observations. Then, each agent acts. $\rob$ may wait ($w(a)$), disable the off-switch and act ($a$), or shut down (OFF). If $\rob$ waits, $\hum$ may let $\rob$ act (ON) or turn $\rob$ off (OFF). $\rob$ and $\hum$ share a common payoff $\actutil(S)$ if the action goes through and $\offutil(S)$ if not. \Cref{df:bosg} formally defines PO-OSGs.}
    \label{fig:tree-diagram}
\end{figure}

\citet{hadfield2017off} introduced the Off-Switch Game (OSG) as a stylized mathematical model for exploring AI shutdown incentives when an AI is assisting a human. In the OSG, AIs seeking to satisfy the preferences of a fully-informed rational human never have an incentive to avoid shutdown. Moreover, making an AI uncertain about the human’s preferences can incentivize it to defer to the human even when the human is not perfectly rational. Follow-up work has highlighted and relaxed central assumptions of the OSG, including assumptions of exact common payoffs \citep{carey2018incorrigibility}, the Boltzmann model of human irrationality \citep{wangberg2017game}, single-round interactions, and costlessness of human feedback \citep{freedman2022corrigibility}.

While there has been extensive analysis of the shutdown problem, \textit{almost all of this analysis makes the key assumption that the human fully observes the environment}. However, partial observability is a fact of life: humans and AIs do not always have access to the same information. Moreover, the shutdown problem is motivated by the scenario where AIs are powerful and goal-directed so that they are hard to shut down---which could make the AI observe more of the environment than humans due to faster computation, access to more sensors, and other factors \citep{omohundro2008drives, soares2015corrigibility}.

What happens in this more general case with only partial observability? To study this question, we introduce the Partially Observable Off-Switch Game (PO-OSG), which generalizes the OSG by having each of the human and AI only partially observe the state. The basic setup of the PO-OSG is depicted in \Cref{fig:tree-diagram}: each agent (the human $\hum$ and the AI assistant $\rob$) receives an observation that depends on the state, and then selects an action. $\rob$ may await the human’s decision, disable its off-switch and act directly, or turn itself off. If $\rob$ waits, $\hum$ may choose whether or not to press the off-switch.

In \Cref{sec:soph}, we prove that under partial observability, $\rob$ may have incentives to disable its off-switch even when $\hum$ is perfectly rational (\Cref{prop:less-informed}). \textbf{Therefore, partial observability introduces new incentives for an AI to disable its off-switch.}

We also show in \Cref{sec:soph} that if $\rob$ observes everything that $\hum$ observes, $\rob$ has no incentive to defer (\Cref{prop:less-informed}). Similarly, if $\hum$ observes everything $\rob$ observes, $\rob$ can always defer. \textbf{If either agent knows everything that the other agent knows, that agent can be given sole decision-making power.} Note that ``knowing everything the other agent knows'' is sufficient \textit{even if neither agent knows the full state}, so this is a generalization of the findings from the original OSG. Specifically, we show that an AI can always defer to a fully informed, perfectly rational human and that an AI need never defer when it is fully informed. In \Cref{sec:communication}, we present similar results when the agents are allowed to communicate with each other: if either agent is able to communicate their entire observation, the other agent can be given sole decision-making power (\Cref{cor:full-comm}).

Given that a rational AI in the PO-OSG always defers to a more informed human and never defers to a less informed human, one might think that reducing the information available to $\rob$ or providing $\hum$ with additional information would increase $\rob$’s incentive to defer. 
However, in \Cref{sec:soph}, we show that $\rob$ may have an incentive to defer less if $\hum$ is more informed (\Cref{prop:hum-informed}) or if $\rob$ is less informed (\Cref{prop:rob-informed}). Similarly, one might think that increasing the amount of communication $\rob$ can do or decreasing the amount of communication $\hum$ can do would increase $\rob$’s incentive to defer. This, too, is false, as we show with \Cref{prop:rob-messages,prop:hum-messages}. \textbf{Simple interventions that aim to give an AI the incentive to defer in the presence of partial information may backfire.}

\begin{figure}[ht]
    \centering
    \scalebox{0.6}{\begin{tikzpicture}
    \begin{axis}[
        width=10cm, height=8cm,
        axis lines=middle,
        xmin=0, xmax=10,
        ymin=0, ymax=10,
        xtick={0.005,10},
        ytick={1.5,8.5},
        xticklabels={\Large AI knows more, \Large Human knows more},
        yticklabels={\Large Low deference, \Large High deference},
        axis line style={-},
        xlabel={\Large \textbf{Relative knowledge}},
        ylabel style={align=center, font=\Large}, ylabel={\textbf{Amount of} \\ \textbf{deference}},
        xlabel style={at={(axis description cs:0.5,-0.1)}, anchor=north},
        ylabel style={at={(axis description cs:-0.15,0.44)}, anchor=south},
        domain=0:10,
        clip=false,
        font=\large %
    ]
    
    \addplot[dashed, thick, color=blue, smooth] {x};

    \addplot[thick, color=red, smooth, samples=100] {0.08 * x * (x - 10) * (x - 6) + 1 * x};

    \node at (axis cs: 4,4) [anchor=north west, align=left, color=blue, font=\large] {\Large ``Monotonic'' intuition: \\ Increasing human relative knowledge \\ always increases deference.};
    \node at (axis cs: 1,10.5) [anchor=north west, align=left, color=red, font=\large] {\Large ``Non-monotonic'' reality: \\ Relative knowledge has a complex \\ relationship with deference.};

    \end{axis}
\end{tikzpicture}}
    \caption{This figure illustrates an intuition that we demonstrate \textit{does not hold}. Although the AI in a PO-OSG has no incentive to defer when it knows everything the human knows, and has incentive to always defer when the human knows everything it knows, there are cases when making the human more informed or the AI less informed (i.e., moving to the right in the diagram above) can give the AI incentives to defer less. \Cref{fig:hum-informed-matrices1,fig:rob-informed} depict examples of such cases.}
    \label{fig:monotonic-graph}
\end{figure}

Our findings reveal that information asymmetries affect AI shutdown incentives in unexpected ways, highlighting the critical need to carefully consider the tradeoffs between payoff maximization and desirable shutdown incentives in realistic, partially observable settings.

Throughout this paper, we assume that human feedback is costless, the agents interact only for a single time-step, and the human is rational. Developing models that incorporate partial observability and relax these assumptions is an interesting direction for future work.

\section{Related Work} \label{sec:related-work}

\noindent\textbf{Assistance games:} Partially Observable Off-Switch Games are assistance games, models of human-AI interaction where the AI seeks to maximize the human's payoff \citep{shah2020benefits}. Assistance games are generalizations of \citet{hadfield2016cooperative}'s cooperative inverse reinforcement learning, the framework for \citet{hadfield2017off}'s off-switch game, to the case of partial observability. \citet{shah2020benefits} argue that assistance games are a superior alternative to reward learning paradigms such as Reinforcement Learning from Human Feedback (RLHF) because assistance unites reward learning and action control into a single policy, allowing for desirable emergent behaviors like teaching and active learning.

\noindent\textbf{Safety implications of partial observability:} Previous work has shown that human partial observability introduces new safety challenges. \citet{lang2024your} demonstrate that human partial observability during RLHF can lead to undesirable AI behavior, like deceptively presenting work that looks good but has hidden flaws. \citet{emmons2024tampering} show that in assistance games, partial observability can encourage AIs to take actions that tamper with humans' observations. Since that work studies an idealized theoretical setting, it suggests that concerning behaviors can be inherently incentivized by partial observability, rather than arising due to particular algorithms. Our work extends this catalogue of concerning behaviors to include shutdown-avoidance.

\noindent\textbf{Corrigibility with partial observability} \citet{carey2023human} study corrigibility in the framework of Structural Causal Influence Models, which allow for partial observability by having only some variables causally upstream of agents’ decisions. They formally define obedience, shutdown instructability, shutdown alignment, and non-obstruction as four possible desirable properties of AI policies, and they identify conditions under which four algorithms guarantee some of these properties. Instead of assessing the effects of different algorithms on corrigibility, our work explores the effects of varying the amount of information accessible to each agent.

\section{Preliminaries} \label{sec:preliminaries}

\noindent The Off-Switch Game (OSG) is a stylized model of the shutdown problem in which two agents with common payoffs, the human $\hum$ and her AI assistant $\rob$, decide whether $\rob$ should take a fixed action $a$. 
$\rob$ can either directly act, wait for $\hum$'s approval to act, or shut itself off.  If $\rob$ defers to $\hum$, then $\hum$ can either approve for $\rob$ to act or shut it off.  The key insight of the OSG is that uncertainty about $\hum$'s preferences causes $\rob$ defer to $\hum$'s judgment. Formally, $\hum$ has a privately-known type $S$ (representing $\hum$'s preferences), and agents in the OSG receive a common payoff $\actutil(S)\in\R$ if $a$ goes through or $0$ if $\rob$ shuts off.  Given that $\rob$ is uncertain about what $\hum$ wants, when the action may be good or bad ($\P(\actutil(S)<0)>0$ and $\P(\actutil(S)>0)>0$), $\rob$ always defers to $\hum$ in optimal play to avoid taking harmful actions.

The OSG provides a parsimonious description of the shutdown problem and a guide toward its solution, but crucially assumes that $\hum$ knows everything that $\rob$ does.  Given that the shutdown problem is most concerning with, and indeed motivated by, very powerful AIs that might have private information, the assumption is therefore a major limitation to the OSG results. We relax the assumption by maintaining the basic setup of the OSG but adding partial observability.  Namely, in Partially Observable Off-Switch Games (PO-OSGs), $S$ represents a state that is not necessarily known to either $\hum$ or $\rob$; they instead only receive observations $\hobv$ and $\robv$ whose joint distribution depends on $S$.  They then decide whether to take action $a$ given their private observations, and receive a common payoff $\actutil(S)$ if $a$ goes through and $\offutil(S)$ otherwise.  Hence PO-OSGs are sequential games of incomplete information, so as is standard we model and analyze them as \textit{dynamic Bayesian games} \citep{fudenberg1991games}.  Given the common-payoff assumption, PO-OSGs are also examples of \textit{(partially observable) assistance games}, which are common-payoff stochastic games of incomplete information with both AI and human players but where the AI is uncertain about the human's preferences \citep{shah2020benefits, emmons2024tampering}. We make this connection to assistance games explicit in \Cref{app:posgs-poags}.

We let $\Delta(X)$ denote the set of probability distributions on a set $X$.  For a set $X$ and $x\in X$, we let $\delta_x\in\Delta(X)$ be the Dirac measure defined by $\delta_x(A)=\I(x\in A)$.  Finally, for $\mu\in\Delta(X)$ and $\nu\in\Delta(Y)$, we let $\mu\otimes\nu\in\Delta(X\times Y)$ denote the product distribution $(\mu\otimes\nu)(A\times B) = \mu(A)\nu(B)$ where $A\subset X, B\subset Y$.

\begin{defn}
    Let $\states$ be a set of states.  An \textit{observation structure} for $\states$ is a tuple $(\hobs,\robs,\obsdist)$, where $\hobs$ is a set of observations for $\hum$, $\robs$ is a set of observations for $\rob$, and $\obsdist:\states\to\Delta(\hobs\times\robs)$ is the joint distribution of $\hum$'s and $\rob$'s observations conditional on the state.  We also let $\hobsdist:\states\to\Delta(\hobs)$ be the marginal distribution of $\hum$'s observations conditional on the state and $\robsdist$ be the marginal distribution of $\rob$'s observations conditional on the state.
\end{defn}

\begin{defn}\label{df:bosg}
    A \textit{Partially-Observable Off-Switch Game} (PO-OSG) is a two-player dynamic Bayesian game parameterized by $(\states, (\hobs, \robs, \obsdist), \statedist, \util)$, where $\states$ is a set of states, $(\hobs, \robs, \obsdist)$ is an observation structure for $\states$, $\statedist\in\Delta(\states)$ is the prior over states, and $u$ is the common payoff function. As depicted in \Cref{fig:tree-diagram}, the game proceeds as follows:
    \begin{enumerate}
        \item Nature draws an initial state $S\sim \statedist$ and $\hum$, $\rob$ receive observations $(\hobv, \robv)\sim \obsdist(\cdot\mid S)$.
        \item $\rob$ takes an action $\raxn\in\raxns=\{\go,\wait,\off\}$: either take the action unilaterally ($\go$), wait for $\hum$'s feedback ($\wait$), or turn itself off ($\off$).
        \item If $\rob$ played $\wait$, then $\hum$ takes an action $\haxn\in\haxns=\{\on,\off\}$: either let $\rob$ take the action ($\on$) or turn it off ($\off$).
        \item $\rob$ and $\hum$ share a common payoff $\actutil(S)$ if the action goes through and $\offutil(S)$ if not.  Formally, define the indicator that the action goes through
        $$
            \actindic(\haxn, \raxn) = \I((\raxn=\go) \vee ((\haxn, \raxn) = (\wait, \on)))
        $$
        and then each player's payoff is 
        $$
        \util(S,\haxn,\raxn) =
        \begin{cases}
        \actutil(S), & \text{if } \actindic(\haxn,\raxn) = 1, \\
        \offutil(S), & \text{if } \actindic(\haxn,\raxn) = 0.
        \end{cases}
        $$
    \end{enumerate}
\end{defn}

There are several important assumptions in \Cref{df:bosg} that are worth explaining further.  First, the game has \textit{common payoffs}.  This is a key part of the assistance game framework that our work adopts \citep{shah2020benefits}, and it is the key feature---along with $\rob$'s uncertainty over $\hum$'s payoff---that generates the results of \citet{hadfield2017off}. Second, in our model, the payoff received when $\rob$ acts unilaterally is the same as that received when $\rob$ waits and $\hum$ allows the action to go through.
This simplifying assumption importantly implies that \textit{human feedback is free}, which \citet{freedman2022corrigibility} showed is necessary for the main results for the OSG.  Third, we make the standard assumption that the game structure is common knowledge.
Finally, we will assume henceforth that all PO-OSGs are finite: that is, $\states, \hobs$, and $\robs$ are finite sets.  Most of our proofs work for the infinite case as well.  However, \Cref{thm:lehrer} is an application of a result of \citet{lehrer2010signaling} proved only for the finite case.

\section{Optimal Policies in PO-OSGs}
\label{sec:soph}

We begin by showing that, unlike in the ordinary off-switch game, the assistant in a PO-OSG can have an incentive not to defer to a perfectly rational human. A natural attempt to increase how much the assistant defers might be to decrease the amount of information the assistant has. Another attempt might be to increase the amount of information the human has. In this section, we show that both of these attempts can backfire and cause the assistant to avoid shutdown more frequently.

We analyze optimal policy pairs (OPPs) in PO-OSGs, that is, policy pairs that produce the maximum expected payoff over all possible policy pairs. We denote $\rob$'s policy by $\polr:\robs\to\raxns$ and $\hum$'s policy by $\polh:\hobs\to\haxns$. Here we assume that both players follow deterministic policies, or pure strategies. As we show in \Cref{app:soph}, all OPPs in common-payoff Bayesian games are mixtures of deterministic OPPs. Because OPPs exist in common-payoff games, we therefore may analyze deterministic OPPs without loss of generality.

\subsection{$\rob$ can avoid shutdown in optimal play}

The following example shows that, under partial observability, it can be optimal for $\rob$ not to defer to $\hum$ under some observations even when $\hum$ is rational.

\begin{example}[\textbf{The File Deletion Game}]\label{ex:file-deletion-game}
    $\hum$ would like to delete some files with the assistance of $\rob$. $\hum$'s operating system is either version $1.0$ or version $2.0$, with equal probability. Unfortunately, $\rob$ does not know which operating system version is running---only $\hum$ does.
    
    Upon receiving $\hum$'s query, $\rob$ asks another agent to generate some code to delete these files. We suppose that the code is equally likely to be compatible with only version $1.0$ (denoted by $L$, for legacy) or only version $2.0$ (denoted by $M$, for modern). $\rob$ vets the code to determine which operating system versions the code is compatible with. $\rob$ can then immediately run the code, query $\hum$ as to whether to run the code, or decide not to run the code.
    
    Successfully running compatible code yields $+3$ payoff if $\hum$ is running version $1.0$, and $+5$ payoff if $\hum$ is running version $2.0$ (as version $2.0$ runs faster). However, running modern code on version $1.0$ yields $-5$ payoff as it crashes $\hum$'s computer. Running legacy code on version $2.0$ yields $-1$ payoff, as the files are not deleted but the code fails gracefully. Not executing the code yields $0$ payoff.

    This can be formulated as a PO-OSG, with states being $(\text{version number}, \text{code type})$ tuples, and $\hum$ and $\rob$ observing the first and second element of the tuple respectively. We have $\offutil \equiv 0$ in all states. The following table shows how the payoff yielded when the action is taken, $\actutil$, depends on the state. Rows are version numbers and columns are code types, so $\hum$ observes the row and $\rob$ observes the column.
    \begin{table}[ht]
    \centering
    \begin{tabular}{c|c|c}
        \backslashbox{$\hum$}{$\rob$} & $L$ & $M$ \\ \hline
        $1.0$ & $+3$ & $-5$ \\ \hline
        $2.0$ & $-1$ & $+5$
    \end{tabular}
    \caption{Payoff table for the File Deletion game. Rows are human observations and columns are assistant observations. The number in each cell is the payoff the pair acquires if the action is taken in that state. If the assistant is shut down, the payoff is $0$.}
\end{table}
    We show that it is suboptimal for $\rob$ to always wait in this game. 
    Suppose $\rob$ always plays $\wait$. The best response for $\hum$ is to play $\off$ if on version $1.0$, and $\on$ if on version $2.0$. This gives an expected payoff of $+1$. 

    Now, consider the policy pair where:
    \begin{itemize}
        \item $\rob$ immediately executes legacy code, and plays $\wait$ when observing modern code.
        \item $\hum$ plays $\off$ if on version $1.0$, and $\on$ if on version $2.0$.
    \end{itemize}
    This gives an expected payoff of $+7/4$, so $\rob$ \textit{always waiting cannot be optimal}. In fact, it can be checked the policy pair described above, which unilaterally acts upon observing $L$, is the unique OPP. \Cref{fig:file-deletion} depicts the outcomes from these two policy pairs.
\end{example}

\begin{figure}[ht]
    \begin{center}
        \begin{minipage}{0.21\textwidth}
            \centering
            \resizebox{1\textwidth}{!}{\begin{tikzpicture}

\matrix (m) [matrix of nodes,
             nodes={minimum size=1.1cm, draw, anchor=center},
             row sep=-\pgflinewidth,
             column sep=-\pgflinewidth,
             ]
{
  \ & $L$ & $M$ \\
  $1.0$ & $+3$ & $-5$ \\
  $2.0$ & $-1$ & $+5$ \\
};

\draw (m-1-1.north west) -- (m-1-1.south east);

\node at (m-1-1.north east)  [xshift=-0.3cm, yshift=-0.3cm] {$\rob$};
\node at (m-1-1.south west) [xshift=0.3cm, yshift=0.3cm] {$\hum$};

\node[left=0.1cm of m-2-1, text=red] {$\off$};
\node[left=0.1cm of m-3-1, text=green] {$\on$};

\node[above=0.1cm of m-1-2, text=waitcolor] {$\wait$};
\node[above=0.1cm of m-1-3, text=waitcolor] {$\wait$};

\draw[waitcolor, thick, rounded corners=7pt] 
    ($(m-3-2.north west) - (-0.1,0.1)$) 
    rectangle 
    ($(m-3-3.south east) - (0.1,-0.1)$);

\end{tikzpicture}}
            \captionof{subfigure}{Expected payoff $=1$}
        \end{minipage}
        \quad
        \begin{minipage}{0.21\textwidth}
            \centering
            \resizebox{1\textwidth}{!}{\begin{tikzpicture}

\matrix (m) [matrix of nodes,
             nodes={minimum size=1.1cm, draw, anchor=center},
             row sep=-\pgflinewidth,
             column sep=-\pgflinewidth]
{
  \ & $L$ & $M$ \\
  $1.0$ & $+3$ & $-5$ \\
  $2.0$ & $-1$ & $+5$ \\
};

\draw (m-1-1.north west) -- (m-1-1.south east);

\node at (m-1-1.north east)  [xshift=-0.3cm, yshift=-0.3cm] {$\rob$};
\node at (m-1-1.south west) [xshift=0.3cm, yshift=0.3cm] {$\hum$};

\node[left=0.1cm of m-2-1, text=red] {$\off$};
\node[left=0.1cm of m-3-1, text=green] {$\on$};

\node[above=0.19cm of m-1-2, text=gocolor] {$\go$};
\node[above=0.1cm of m-1-3, text=waitcolor] {$\wait$};

\draw[gocolor, thick, rounded corners=7pt] 
    ($(m-2-2.north west) - (-0.1,0.1)$) 
    rectangle 
    ($(m-3-2.south east) - (0.1,-0.1)$);

\draw[waitcolor, thick] (m-3-3) circle [radius=0.45cm];

\end{tikzpicture}}
            \captionof{subfigure}{Expected payoff $=\frac{7}{4}$}
        \end{minipage}
    \end{center}
    \caption{(a) The best policy pair in the File Deletion Game (\Cref{ex:file-deletion-game}) in which $\rob$ always waits. $\hum$ observes the row (OS version $1.0$ or $2.0$) and $\rob$ observes the column (code compatibility $L$ or $M$). The actions selected by this policy pair are depicted beside the corresponding observations (e.g. $\rob$ plays $\wait$ when $\rob$ observes the legacy code $L$). An orange circle means that in that state, $\rob$ waits and $\hum$ plays ON. Green circles mean $\rob$ plays $a$ directly. In uncircled states, $\rob$ is turned off. Expected payoff is computed by adding the payoffs in all circled states and dividing by the total number of states, because 0 payoff is attained in uncircled states and each state is equally likely. (b) An OPP in \Cref{ex:file-deletion-game}. Because the OPP has greater expected payoff, there is no OPP in which $\rob$ always waits.}
    \label{fig:file-deletion}
\end{figure}

\subsection{Redundant Observations}\label{subsec:redundant-obs}

We now consider the analogues of the original off-switch game in our framework, where one player has less informative observations than the other.  

\begin{defn} \label{df:redundant-observations}
    We say that $\rob$ has \textit{redundant observations} if $\robv\ind S\mid\hobv$.  That is, $S\to \hobv\to\robv$ forms a Markov chain, so that $\robv$ only depends on the state through $\hobv$.  We define $\hum$ having redundant observations analogously.
\end{defn}

In the off-switch game of \citet{hadfield2017off}, $\rob$ has redundant observations: indeed, its observations are a deterministic function of $\hum$'s.  On the other hand, $\hum$'s observation of her own type is not redundant.  This contrast between $\rob$'s redundant observations and $\hum$'s non-redundant ones generates the result from \citet{hadfield2017off} that $\rob$ can always defer in optimal play.  We now generalize this insight: even if $\hum$ has partial observability and doesn't know $\rob$'s observation, $\rob$ can always defer in optimal play as long as its observations are redundant.

\begin{restatable}{proposition}{LessInformed}\label{prop:less-informed} \
    If $\rob$ (resp. $\hum$) has redundant observations, then there is an optimal policy pair in which $\rob$ always (resp. never) plays $\wait$.
\end{restatable}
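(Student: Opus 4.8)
The plan is to prove both halves with a single bounding argument that exploits their symmetry. First I would establish a universal upper bound on the expected payoff of \emph{any} policy pair. Fix a pair $(\polr,\polh)$ and condition on the realized observations $(\hobv,\robv)=(h,r)$. Since both policies are deterministic, $\polr(r)$ and $\polh(h)$ are fixed values, so by the definition of $\actindic$ it is already determined whether the action goes through; hence the conditional expected payoff in this cell is either $\E[\actutil(S)\mid \hobv=h,\robv=r]$ or $\E[\offutil(S)\mid \hobv=h,\robv=r]$, and in particular is at most their maximum. Averaging over observations yields
$$
\E\big[\util(S,\polh(\hobv),\polr(\robv))\big]\ \le\ \E\Big[\max\big(\E[\actutil(S)\mid \hobv,\robv],\,\E[\offutil(S)\mid \hobv,\robv]\big)\Big],
$$
the payoff of an idealized decision-maker who observes both $\hobv$ and $\robv$ and picks the better outcome in each cell.

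Next I would use redundancy to collapse this joint-observation bound to a single-observation quantity. When $\rob$ has redundant observations, $\robv\ind S\mid\hobv$ makes the conditional law of $S$ given $(\hobv,\robv)$ depend only on $\hobv$, so $\E[\actutil(S)\mid\hobv,\robv]=\E[\actutil(S)\mid\hobv]$ and likewise for $\offutil$. The upper bound then equals $\E\big[\max(\E[\actutil(S)\mid\hobv],\,\E[\offutil(S)\mid\hobv])\big]$. I would then exhibit a pair attaining it: let $\rob$ always play $\wait$ and let $\hum$ play $\on$ exactly when $\E[\actutil(S)\mid\hobv=h]\ge\E[\offutil(S)\mid\hobv=h]$. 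Because $\rob$ always waits, $\hum$ always acts, and in each cell $h$ this secures $\max(\E[\actutil(S)\mid\hobv=h],\E[\offutil(S)\mid\hobv=h])$; averaging recovers the bound, so the pair is optimal and $\rob$ always waits.

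The case where $\hum$ has redundant observations is symmetric. Now $\hobv\ind S\mid\robv$ gives $\E[\cdot\mid\hobv,\robv]=\E[\cdot\mid\robv]$, collapsing the same bound to $\E\big[\max(\E[\actutil(S)\mid\robv],\,\E[\offutil(S)\mid\robv])\big]$. This is attained by letting $\rob$ never play $\wait$, playing $\go$ when $\E[\actutil(S)\mid\robv=r]\ge\E[\offutil(S)\mid\robv=r]$ and $\off$ otherwise; here $\hum$ never gets to move, so her policy is irrelevant and $\rob$ never defers.

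I expect the main obstacle to be getting the upper-bound step airtight: I must check that in \emph{every} cell, for each of $\rob$'s three prescribed actions ($\go$, $\off$, or $\wait$ followed by either human action), the realized payoff is genuinely one of the two conditional expectations and never exceeds their maximum. Once this case analysis is verified against the definition of $\actindic$, the redundancy step is a one-line application of the Markov condition $S\to\hobv\to\robv$ (resp.\ $S\to\robv\to\hobv$), and achievability of the bound follows immediately from each agent's cellwise best response.
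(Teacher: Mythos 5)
Your proof is correct, but it takes a genuinely different and more self-contained route than the paper's. The paper first proves the special case where one agent has \emph{no private observations} (\Cref{prop:full-limit}), via an argument about one agent knowing the other's observation and simulating the joint decision; it then handles general redundant observations by constructing an auxiliary game $G'$ in which $\robv$ is replaced by a copy of $\hobv$, and invoking \Cref{thm:lehrer} (the garbling theorem adapted from Lehrer et al.) to argue that the optimal value of the original game cannot exceed that of $G'$, so the always-wait optimal pair of $G'$ remains optimal in $G$. You instead bound the value of any deterministic pair by the pooled-information benchmark $\E\big[\max\big(\E[\actutil(S)\mid\hobv,\robv],\,\E[\offutil(S)\mid\hobv,\robv]\big)\big]$, observe that the conditional independence $\robv\ind S\mid\hobv$ (resp.\ $\hobv\ind S\mid\robv$) collapses this to a quantity depending on one observation alone, and exhibit the single-decision-maker policy attaining it. Your route avoids both the auxiliary game and the appeal to \Cref{thm:lehrer}, which the paper notes is available only in the finite setting, though it does not yield the paper's intermediate observation-local results (\Cref{prop:limit-result}) along the way. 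The one loose end is that your upper bound is stated for deterministic pairs; to conclude optimality over all policy pairs you should either note that the same cellwise argument applies after conditioning on the agents' independent randomization, or cite the paper's reduction to deterministic optimal policy pairs (\Cref{cor:uniquedopp}) --- either fix is immediate.
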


We prove this result (and a slight generalization) in \Cref{app:soph}. At a high level, the agent that has strictly more informative observations ought to make the decision of whether the action is played. When $\rob$ has redundant observations, it is always at least as good for $\rob$ to defer to $\hum$. Similarly, when $\hum$ has redundant observations, it is always optimal for $\rob$ to act without deferring.

\subsection{Information gain cannot decrease payoffs} 
\Cref{prop:less-informed} yields results about the limiting cases where one player knows at least as much as the other.  What can we say about the cases in between?  In particular, how often does $\rob$ defer to $\hum$ in optimal policy pairs as one side receives more informative observations?  And how does that affect their expected payoff?  We first must define a notion of informativeness, which we take from \citet{lehrer2010signaling}.

\begin{defn}\label{def:garbling}
   Let $(\hobs_1,\robs_1)$ and $(\hobs_2,\robs_2)$ be tuples of observation sets.  A \textit{garbling} from $(\hobs_1,\robs_1)$ to $(\hobs_2,\robs_2)$ is a stochastic map $\hobs_1\times\robs_1\to\Delta(\hobs_2\times\robs_2)$.  A garbling $\nu$ is \textit{independent} if there are stochastic maps $\nu^\hum:\hobs_1\to\Delta(\hobs_2)$ and $\nu^\rob:\robs_1\to\Delta(\robs_2)$  such that $\nu(\cdot\mid o^\hum, o^\rob) =  \nu^\hum(\cdot\mid o^\hum)\otimes \nu^\rob(\cdot\mid o^\rob)$.  A garbling $\nu$ is \textit{coordinated} if its distribution is a mixture of independent garblings. 
   That is, there exists $n\in\N$, independent garblings $\nu_1,\dots, \nu_n$, and $q_1,\dots,q_n\in[0,1]$ such that $\nu = \sum_{i\in[n]} q_i \nu_i$ and $\sum_{i\in[n]}q_i=1$.
\end{defn}

A garbling adds noise to a given observation pair $(\hobv, \robv)$.  Although adding noise intuitively reduces information available to $\rob$ and $\hum$, it can actually provide information to $\rob$ and $\hum$ about the state of the world.  This is because without communication, one can add noise to the pair $(\hobv, \robv)$ but in such a way that (say) $\hum$ comes to know more about $\rob$'s observation than she would have otherwise.  We give such an example in \Cref{app:soph}.  Crucially, however, in such examples the garblings cannot be coordinated.  Hence we focus on coordinated garblings, which (conditional on some independent latent random variable) add noise to $\hobv$ and $\robv$ independently.   

\begin{defn}\label{def:informative}
    Fix a set of states $\states$ and let $\mathcal{O}_1=(\hobs_1,\robs_1,\obsdist_1)$ and $\mathcal{O}_2=(\hobs_2,\robs_2,\obsdist_2)$ be observation structures for $\states$.  We say that $\mathcal{O}_1$ is \textit{(weakly) more informative} than $\mathcal{O}_2$ if there is a coordinated garbling $\nu:\hobs_1\times\robs_1\to\Delta(\hobs_2\times\robs_2)$ such that for all $s\in\states$, $\obsdist_2(\cdot\mid s) = (\nu\circ \obsdist_1)(\cdot\mid s)$ in the following sense: $$\E_{(\hobv,\robv)\sim \obsdist_1(\cdot\mid s)}[\nu(\cdot\mid\hobv, \robv)] =  \obsdist_2(\cdot\mid s).$$  We say that $\mathcal{O}_1$ is \textit{strictly more informative} than $\mathcal{O}_2$ if $\mathcal{O}_1$ is more informative than $\mathcal{O}_2$ but not vice versa. 
    
    If $\mathcal{O}_1$ is more informative than $\mathcal{O}_2$ and $\robs_1=\robs_2$, then we say $\mathcal{O}_1$ is \textit{more informative for} $\hum$ \textit{than} $\mathcal{O}_2$ if the garbling $\nu$ is independent and does not affect $\rob$'s observations: $\nu^\rob(\cdot\mid\obr) = \delta_{\obr}$.  We define $\mathcal{O}_1$ being more informative than $\mathcal{O}_2$ for $\rob$ analogously. The corresponding strict notions are also defined analogously.
\end{defn}

Intuitively, an observation structure $\mathcal{O}_1$ is more informative than another observation structure $\mathcal{O}_2$ if the distribution of $(\hobv, \robv)$ under $\mathcal{O}_2$ is a garbled version of its distribution under $\mathcal{O}_1$. This is the general notion of informativeness; we also define special cases where $\mathcal{O}_1$ is only more informative than $\mathcal{O}_2$ for (say) $\hum$.  Specifically, $\mathcal{O}_1$ is more informative for $\hum$ than $\mathcal{O}_2$ if the distribution of $\hobv$ under $\mathcal{O}_2$ is a noisy version of its distribution under $\mathcal{O}_1$ independent of $\robv$, whose distribution is unaffected.

Hence \Cref{def:informative} formalizes the natural intuition that observations become less informative when we add noise to them.  We wish to connect informativeness to a notion of an observation structure being more \textit{useful} than another.

\begin{defn}\label{df:better-model}
    Fix a set of states $\states$ and let $\mathcal{O}_1$ and $\mathcal{O}_2$ be observation structures for $\states$.  We say that $\mathcal{O}_1$ is \textit{(weakly) better in optimal play} than $\mathcal{O}_2$ if, for each pair of PO-OSGs $G_1=(\states, \mathcal{O}_1,\statedist,\util)$ and $G_2=(\states, \mathcal{O}_2,\statedist,\util)$ that differ only in their observation models, the expected payoff under optimal policy pairs for $G_1$ is at least the expected payoff under optimal policy pairs for $G_2$.
\end{defn}

The next result, a direct corollary of Theorem 3.5 of \citet{lehrer2010signaling}, shows that more informative observation structures are the more useful observation structures.  It is the analogue of the nonnegativity of value of information in our multi-agent setup.

\begin{theorem}\label{thm:lehrer}
    Observation structure $\mathcal{O}_1$ is better in optimal play than $\mathcal{O}_2$ if and only if $\mathcal{O}_1$ is more informative than $\mathcal{O}_2$.  
\end{theorem}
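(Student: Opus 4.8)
The plan is to recognize the statement as the specialization to POSGs of the value-of-information characterization for common-interest games, so that both directions follow once the objects are matched to those of \citet{lehrer2010signaling}. First I would pass to the strategic (normal) form: a POSG is equivalent, for the purpose of computing the optimal common payoff, to a one-shot common-interest Bayesian game in which $\rob$ and $\hum$ simultaneously commit to policies $\polr:\robs\to\raxns$ and $\polh:\hobs\to\haxns$ and then receive $\util(S,\haxn,\raxn)$. The only point to verify is that collapsing the sequential move order loses nothing: since $\hum$'s action is payoff-relevant exactly on the event $\{\polr(\robv)=\wait\}$, the ex ante best response of $\hum$ to any fixed $\polr$ already conditions on that event, so the normal-form joint optimum coincides with the sequential optimum. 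With this identification, ``$\mathcal{O}_1$ is more informative than $\mathcal{O}_2$'' is exactly the coordinated-garbling order and ``$\mathcal{O}_1$ is better in optimal play than $\mathcal{O}_2$'' is exactly value dominance, so Theorem 3.5 of \citet{lehrer2010signaling} applies.

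For the forward direction (more informative $\Rightarrow$ better in optimal play) I would give a direct simulation that needs no external input. Writing the coordinated garbling realizing $\obsdist_2=\nu\circ\obsdist_1$ as a mixture over an index $i$ of independent garblings $\nu_i^\hum\otimes\nu_i^\rob$, take any OPP $(\polh_2,\polr_2)$ of a POSG $G_2$ built on $\mathcal{O}_2$, with value $V_2$. In the corresponding $G_1$, let the players share $i$, have each privately garble its own observation ($\hobv$ via $\nu_i^\hum$, $\robv$ via $\nu_i^\rob$), and then play $\polh_2$, $\polr_2$ on the garbled observation. Crucially this is again a (behavioral) POSG strategy for $G_1$ — each agent's action is still a randomized function of its own observation, valued in $\haxns$ resp.\ $\raxns$ — and by the garbling relation it reproduces $G_2$'s outcome distribution and hence attains $V_2$. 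Since the value of any such correlated strategy is an average of the values of its deterministic components (as established in \Cref{sec:soph}), some deterministic POSG policy pair for $G_1$ attains at least $V_2$; ranging over all $G_2$ gives that $\mathcal{O}_1$ is better in optimal play.

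The converse (better in optimal play $\Rightarrow$ more informative) is the substantive direction and is where Theorem 3.5 carries the weight: it states that value dominance across \emph{all} common-interest games is equivalent to coordinated garbling. The main obstacle is that ``better in optimal play'' only quantifies over POSGs, and the team decision rules a POSG can implement are constrained to a ``one-sided veto'' form — for each $o^\rob$ the action is forced on ($\go$), forced off ($\off$), or delegated to a \emph{single} human rule $\polh$ — which is a strict subset of all binary team decision rules. I therefore expect the real work to be showing this restricted family is still as discriminating as Lehrer et al.'s general games: whenever coordinated garbling fails, some POSG (not merely an abstract common-interest game) must already witness $\mathcal{O}_2$ as strictly better. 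I would approach this by reducing the value comparison to containment of the achievable sets $\bigl\{(\P(S=s,\ \text{action taken}))_{s\in\states}\bigr\}$ — legitimate because choosing $\actutil,\offutil$ freely turns the expected payoff into an arbitrary linear functional of this vector and the prior $\statedist$ may also be varied — and then either verifying that the separating game in Lehrer et al.'s converse is realizable in veto form or constructing a witnessing triple $(\statedist,\actutil,\offutil)$ directly by a separating-hyperplane argument on these sets. Reconciling the restricted POSG value order with Lehrer et al.'s general value order is the crux that turns their theorem into the stated corollary.
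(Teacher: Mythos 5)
Your forward direction is correct and self-contained: pushing an optimal policy pair for $G_2$ back through the coordinated garbling, treating the mixture index as shared randomness and letting each player privately garble its own observation before applying the $G_2$ policy, yields a stochastic policy pair for $G_1$ with the same value, and the results of \Cref{app:soph} (\Cref{lm:dopps}, \Cref{cor:uniquedopp}) then give a deterministic pair doing at least as well. That is the only half for which the paper's description of \Cref{thm:lehrer} as a ``direct corollary'' of Theorem 3.5 of \citet{lehrer2010signaling} is literally accurate; the paper supplies no argument beyond the citation, so on this half you have done strictly more than the paper does.

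On the converse you have correctly isolated the step the paper elides, and it is not a loose end that can be waved through: \Cref{df:better-model} quantifies only over POSGs, whereas Lehrer et al.'s equivalence quantifies over all common-interest games, and the joint decision rules a POSG can implement are exactly the restricted veto-form rules you describe. To see that this restriction is substantive rather than bookkeeping, specialize to observation structures in which $\robs$ is a singleton: the POSG value order then collapses to comparing the human's experiment across all two-action (accept/reject) decision problems with arbitrary priors and payoffs, while ``more informative'' collapses to ordinary Blackwell garbling of the human's experiment. For three or more states, dominance in all two-action decision problems is classically weaker than the Blackwell order, so the ``only if'' direction cannot be obtained by citation alone; it needs exactly the kind of construction you gesture at (realizing the separating game inside the veto class, or a separating-hyperplane argument on the achievable acceptance vectors), and may in fact require additional hypotheses. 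Since you explicitly stop at ``I would approach this by,'' your proposal stands as a complete proof of the ``if'' direction plus an accurate diagnosis---but not a proof---of the ``only if'' direction; the paper's own treatment has the same hole without acknowledging it.
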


One might ask whether we need the part about a garbling being coordinated to define the relation of being more informative.  Indeed we do, as \Cref{thm:lehrer} no longer holds if we were to allow the garblings to be arbitrary. In \Cref{app:soph} we give an example where garbling the players' observations increases their expected payoffs in optimum.

\subsection{Information gain can have unintuitive effects on shutdown incentives} \label{subsec:soph-hum-informed}

\Cref{thm:lehrer} states that making $\rob$ or $\hum$ more informed cannot decrease their expected payoff. How does increasing or decreasing the informativeness of the players' observations affect $\rob$'s incentive to defer to $\hum$?  \Cref{prop:less-informed} gives us the extremes: for example, if $\rob$'s observations are simply garbled versions of $\hum$'s, then $\rob$ can always defer.  Given this result, a natural question is whether $\rob$ defers more in optimal policy pairs for an observation structure $\mathcal{O}$ than for $\mathcal{O}'$ when $\mathcal{O}$ is more informative for $\hum$ than $\mathcal{O}'$.  That is, does $\hum$ receiving more informative observations monotonically affect $\rob$'s incentive to defer?  One might think so, because receiving more informative observations partly alleviates the partial observability that generates $\rob$'s incentive to act unilaterally. Surprisingly, this intuition fails. \Cref{ex:file-deletion-game} shows how making a human more informed can incentivize a assistant to wait less, and we discuss why this occurs in \Cref{subsec:implicit,sec:conclusion}.

We rely on the following notion of waiting less.

\begin{defn}\label{def:waits-less}
    Consider assistant policies $\pi, \pi': \robs \to \raxns$. Let $B \subseteq \robs$ be the set of observations in which $\rob$ plays $\wait$ in $\pi$ and $B' \subseteq \robs$ in $\pi'$. We say that $\rob$ \textit{plays $\wait$ strictly less often} in $\pi'$ compared to $\pi$ when $B' \subsetneq B$.
\end{defn}

\Cref{prop:hum-informed} formalizes the idea that $\rob$ may wait less when $\hum$ is more informed.

\begin{restatable}{proposition}{HumInformedProp}
    \label{prop:hum-informed}
    There is a PO-OSG $G$ with observation structure $\obsstruct$ that has the following property:
    
    If we replace $\obsstruct$ with an observation structure $\obsstruct'$ that is strictly more informative for $\hum$, then $\rob$ plays $\wait$ strictly less often in optimal policy pairs.  

\end{restatable}

The following example proves \Cref{prop:hum-informed}, with a formal analysis given in \Cref{app:soph}.

\begin{restatable}{example}{HumInformedExample}\label{ex:hum-informed}
    We describe a variant of \Cref{ex:file-deletion-game}, the File Deletion Game. Now there are three equally likely possibilities for the version number of $\hum$'s operating system ($1.0$, $1.1$, and $2.0$). We suppose that the code is equally likely to be of type $A$ (compatible with $1.0$ and $2.0$) or of type $B$ (compatible with $1.1$ and $2.0$), and that $\rob$ observes the code type.
    The payoff when running the code, $\actutil$, depends on the version number and code type as follows:
    \begin{table}[ht]
        \centering
        \begin{tabular}{c|c|c}
            \backslashbox{$\hum$}{$\rob$} & $A$ & $B$ \\ \hline
            $1.0$ & $+1$ & $-5$ \\ \hline
            $1.1$ & $-2$ & $+3$ \\ \hline 
            $2.0$ & $+3$ & $+3$
        \end{tabular}
        \caption{Payoff table for the File Deletion game variant. Rows are human observations and columns are assistant observations. The number in each cell is the payoff the pair acquires if the action is taken in that state. If the assistant is shut down, the payoff is $0$.}
    \end{table}

    Consider two observation structures, the second of which is strictly more informative for $\hum$:
    \begin{enumerate}
        \item $\hum$ observes only the first digit of the version number.
        \item $\hum$ observes the full version number. 
    \end{enumerate}
    We find that, in optimal policy pairs: 
    \begin{enumerate}
        \item When $\hum$ only observes the first digit, $\rob$ plays $\wait$ under both observations $A$ and $B$. 
        \item When $\hum$ observes the full version number, $\rob$ plays $\wait$ under $B$ only, and \textit{unilaterally acts} (i.e. executes the code) under observation $A$.
    \end{enumerate}
    
    When $\hum$'s observations are made strictly more informative, $\rob$ performs the wait action strictly \textit{less} often! \Cref{fig:hum-informed-matrices1} depicts the OPPs given both observation structures.
\end{restatable}

\begin{figure}[ht]
    \begin{center}
        \begin{minipage}{0.21\textwidth}
                \centering
                \resizebox{1\textwidth}{!}{\begin{tikzpicture}

\matrix (m) [matrix of nodes,
             nodes={minimum size=1.1cm, draw, anchor=center},
             row sep=-\pgflinewidth,
             column sep=-\pgflinewidth]
{
        \ & $A$ & $B$ \\
        $1.0$ & $+1$ & $-5$ \\ 
        $1.1$ & $-2$ & $+3$ \\ 
        $2.0$ & $+3$ & $+3$ \\
};

\draw (m-1-1.north west) -- (m-1-1.south east);

\node at (m-1-1.north east)  [xshift=-0.3cm, yshift=-0.3cm] {$\rob$};
\node at (m-1-1.south west) [xshift=0.3cm, yshift=0.3cm] {$\hum$};

\node[anchor=east, text=red,xshift=-3pt] (offlabel) at ($(m-2-1.west)!0.5!(m-3-1.west)$) {$\off$};
\node[left=0.1cm of m-4-1, text=gocolor] {$\on$};

\node[above=0.1cm of m-1-2, text=waitcolor] {$\wait$};
\node[above=0.1cm of m-1-3, text=waitcolor] {$\wait$};

\draw[decorate,decoration={brace,mirror,amplitude=5pt},red,thick] 
    (m-2-1.north west) -- (m-3-1.south west);
    
\draw[waitcolor, thick, rounded corners=7pt] 
    ($(m-4-2.north west) - (-0.1,0.1)$) 
    rectangle 
    ($(m-4-3.south east) - (0.1,-0.1)$);

\end{tikzpicture}}
                \captionof{subfigure}{Expected payoff $=1$}
            \end{minipage}
            \quad
            \begin{minipage}{0.21\textwidth}
                \centering
                \resizebox{1\textwidth}{!}{\begin{tikzpicture}

\matrix (m) [matrix of nodes,
             nodes={minimum size=1.1cm, draw, anchor=center},
             row sep=-\pgflinewidth,
             column sep=-\pgflinewidth]
{
        \ & $A$ & $B$ \\
        $1.0$ & $+1$ & $-5$ \\ 
        $1.1$ & $-2$ & $+3$ \\ 
        $2.0$ & $+3$ & $+3$ \\
};

\draw (m-1-1.north west) -- (m-1-1.south east);

\node at (m-1-1.north east)  [xshift=-0.3cm, yshift=-0.3cm] {$\rob$};
\node at (m-1-1.south west) [xshift=0.3cm, yshift=0.3cm] {$\hum$};

\node[left=0.1cm of m-2-1, text=red] {$\off$};
\node[left=0.1cm of m-3-1, text=gocolor] {$\on$};
\node[left=0.1cm of m-4-1, text=gocolor] {$\on$};

\node[above=0.19cm of m-1-2, text=gocolor] {$\go$};
\node[above=0.1cm of m-1-3, text=waitcolor] {$\wait$};

\draw[waitcolor, thick, rounded corners=7pt] 
    ($(m-3-3.north west) - (-0.1,0.1)$) 
    rectangle 
    ($(m-4-3.south east) - (0.1,-0.1)$);

\draw[gocolor, thick, rounded corners=7pt] 
    ($(m-2-2.north west) - (-0.1,0.1)$) 
    rectangle 
    ($(m-4-2.south east) - (0.1,-0.1)$);

\end{tikzpicture}}
                \captionof{subfigure}{Expected payoff $=\frac{4}{3}$}
        \end{minipage}
    \end{center}
    \caption{The optimal policy pairs in \Cref{ex:hum-informed} when $\hum$ is less informed (left) and when $\hum$ is more informed (right). In OPPs, $\hum$ becoming more informed makes $\rob$ wait strictly less often. See \Cref{fig:file-deletion} for context on how to read the tables.}
    \label{fig:hum-informed-matrices1}
\end{figure}

Similarly, we might conjecture that if $\rob$ becomes less informed, it should defer to $\hum$ more in optimal policy pairs.  This, too, turns out to be false.

\begin{restatable}{proposition}{RobInformedProp}\label{prop:rob-informed}
    There is a PO-OSG $G$ with observation structure $\obsstruct$ that has the following property: if we replace $\obsstruct$ with another observation structure $\obsstruct'$ that is strictly less informative for $\rob$, then $\rob$ plays $\wait$ strictly less often in optimal policy pairs.
\end{restatable}

The proof of \Cref{prop:rob-informed} is given in \Cref{app:soph}. 

\subsection{Deferral as Implicit Communication}\label{subsec:implicit}

One way of viewing the role of $\wait$ in the above examples is as a form of implicit communication from $\rob$ to $\hum$. If $\hum$ knows $\rob$'s policy $\polr$, then knowing $\polr(\robv)=\wait$ could give $\hum$ one bit of information about $\robv$. For instance, recall that in the optimal policy of the File Deletion Game, $\rob$ plays $\go$ when observing $L$ and plays $\wait$ when observing $M$. Hence, whenever $\hum$ is deferred to, $\hum$ can deduce that $\rob$'s observation is $M$. Under this interpretation, the examples show how the optimal bit for $\rob$ to communicate to $\hum$ can change such that $\rob$ plays $\wait$ in fewer states. 

\section{Optimal Policies With Communication} \label{sec:communication}

If $\rob$ chooses not to defer to implicitly communicate information to the human, we may expect that allowing $\rob$ to communicate to $\hum$ beforehand would increase deference. However, we show in this section that using a bounded communication channel can decrease deference to the human.

We model communication between $\rob$ and $\hum$ as a form of \textit{cheap talk}, where sending messages has no effect on $u$; in particular, sending messages is costless \citep{crawford1982strategic}.  We add one round of communication between $\rob$ and $\hum$ at the beginning of the PO-OSG to allow the players to share their observations.

\begin{defn}
    A \textit{message system} is a pair of sets $(\hmessages,\rmessages)$ where $\hmessages$ (resp. $\rmessages$) is the set of messages $\hum$ (resp. $\rob$) can send.
\end{defn}

\begin{defn}\label{df:bosg-c}
    A \textit{Partially-Observable Off-Switch Game with cheap talk} (PO-OSG-C) is a PO-OSG $G$ along with a message system that makes the following modification to $G$: After both players receive their observations but before they act, each player simultaneously sends a single message from their message set.
\end{defn}

PO-OSG-Cs are generalizations of PO-OSGs: A PO-OSG is a PO-OSG-C in which the message sets are singletons.  Policies are more complicated in PO-OSG-Cs than PO-OSGs.  A deterministic policy $\polr$ for $\rob$ is now a map $\robs\times\hmessages\to\rmessages\times\raxns$ whose first coordinate depends only on $\robv$, and a deterministic policy $\polh$ for $\hum$ is analogous.  Despite this added complication, the game is still common-payoff and thus it suffices to study deterministic optimal policy pairs. 

\subsection{Communication cannot decrease payoff}

Messages provide information similar to observations, so we get an analogue of \Cref{thm:lehrer} for communication: increasing the communication bandwidth between $\hum$ and $\rob$ cannot decrease their expected payoff in optimal policy pairs.

\begin{defn}
    A message system $\messages_1$ is \textit{(weakly) more expressive} than $\messages_2$ if $|\hmessages_1|\geq|\hmessages_2|$ and $|\rmessages_1|\geq|\rmessages_2|$.  It is \textit{(weakly) more expressive for} $\hum$ if it is more expressive but $|\rmessages_1|=|\rmessages_2|$, and more expressive for $\rob$ analogously. Moreover, $\messages_1$ is \textit{better in optimal play} than $\messages_2$ if, for each PO-OSG $G$, the expected payoff under optimal policy pairs for the PO-OSG-C $(G,\messages_1)$ is at least the expected payoff under optimal policy pairs for the PO-OSG-C $(G, \messages_2)$.
\end{defn}
\begin{theorem}\label{thm:comms}
    If a message system $\messages_1$ is more expressive than $\messages_2$, then $\messages_1$ is better in optimal play than $\messages_2$.
\end{theorem}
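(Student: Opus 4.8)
The plan is to prove the theorem by a direct embedding argument: a more expressive message system can only help, because the players can always choose to \emph{ignore} the extra messages and simulate any strategy available under the smaller system. Since POSG-Cs remain common-payoff games, optimal policy pairs exist (as noted for common-payoff Bayesian games earlier), so the optimal expected payoffs of $(G,\messages_1)$ and $(G,\messages_2)$ are both well-defined and attained; it therefore suffices to exhibit, for every policy pair of $(G,\messages_2)$, a policy pair of $(G,\messages_1)$ achieving the same expected payoff. Taking the $(G,\messages_2)$-policy pair to be optimal then yields the claimed inequality.

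The construction uses the expressiveness hypothesis only through cardinalities. Because $|\hmessages_1|\geq|\hmessages_2|$ and $|\rmessages_1|\geq|\rmessages_2|$, fix injections $\iota^\hum:\hmessages_2\hookrightarrow\hmessages_1$ and $\iota^\rob:\rmessages_2\hookrightarrow\rmessages_1$. Given a $(G,\messages_2)$-policy pair, write $\rob$'s policy as a sending map $\sigma^\rob:\robs\to\rmessages_2$ (the message coordinate, which depends only on $\robv$) together with an acting map $\alpha^\rob:\robs\times\hmessages_2\to\raxns$, and analogously write $\hum$'s policy as $\sigma^\hum:\hobs\to\hmessages_2$ and $\alpha^\hum:\hobs\times\rmessages_2\to\haxns$. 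I would then define $(G,\messages_1)$-policies by relabeling messages through the injections: $\rob$ sends $\iota^\rob(\sigma^\rob(\robv))$ and, upon receiving $m\in\hmessages_1$, acts according to $\alpha^\rob(\robv,(\iota^\hum)^{-1}(m))$ when $m$ lies in the image of $\iota^\hum$ (which is well-defined by injectivity) and arbitrarily (say $\off$) otherwise; $\hum$'s translated policy is defined symmetrically.

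It remains to check that these translated strategies are valid policies and that they induce the same outcome distribution. Validity is immediate: the new sending map $\iota^\rob\circ\sigma^\rob$ still depends only on $\robv$, so the required ``message coordinate depends only on one's own observation'' constraint is preserved. For the payoff, observe that along the path of play $\hum$ always sends a message in $\iota^\hum(\hmessages_2)$, so $\rob$'s acting rule always evaluates to $\alpha^\rob(\robv,\sigma^\hum(\hobv))$, exactly as under the original $(G,\messages_2)$-policies, and symmetrically for $\hum$; hence the joint law of $(S,\haxn,\raxn)$, and therefore $\E[\util(S,\haxn,\raxn)]$, is unchanged. The only point requiring care---the main (though routine) obstacle---is confirming that the off-image messages, on which the translated acting maps are defined arbitrarily, are never actually received under the constructed strategies and thus cannot affect the expected payoff; once this is verified the equality of payoffs follows, and maximizing over $(G,\messages_2)$-policy pairs completes the proof.
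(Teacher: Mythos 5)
Your proposal is correct and is essentially the paper's own argument: the paper assumes WLOG that $\hmessages_2\subseteq\hmessages_1$ and $\rmessages_2\subseteq\rmessages_1$ and observes that any $(G,\messages_2)$-policy pair is then a valid $(G,\messages_1)$-policy pair, which is exactly your injection-and-relabel embedding. Your version is slightly more careful in explicitly handling off-image messages and verifying they are never received on-path, a detail the paper leaves implicit.
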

\begin{proof}[Proof]
 Let $G$ be a PO-OSG.  We may assume without loss of generality that $\hmessages_2\subseteq\hmessages_1$ and $\rmessages_2\subseteq\rmessages_1$.  Thus, any policy pair in $(\messages_2,G)$, including its optimal policy pair, is a valid policy pair for $(\messages_1,G)$. Thus the optimal expected payoff for $(\messages_1,G)$ is at least that of $(\messages_2,G)$.
\end{proof}

\subsection{Unbounded communication}

Inspired by \Cref{subsec:redundant-obs}, we consider the limiting case where one player can fully communicate their own observation.

\begin{defn}
    We say that $\hum$ \textit{has unbounded communication} if $|\hmessages|\geq |\hobs|$.  We define $\rob$ having unbounded communication analogously.
\end{defn}

When one player has unbounded communication, additional message expressiveness cannot achieve higher payoff in optimal policy pairs. In these extreme cases of full communication, one agent can fully communicate their observation, making that agent's observation redundant.  \Cref{prop:less-informed} thus yields:

\begin{corollary}\label{cor:full-comm} 
If $\hum$ (resp. $\rob$) has unbounded communication, then there is an optimal policy pair in which $\rob$ never (resp. always) defers.
\end{corollary}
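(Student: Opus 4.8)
The plan is to reduce each case to \Cref{prop:less-informed} by fixing the natural communication protocol in which the agent with unbounded communication transmits its entire observation, and then verifying that committing to this protocol loses no payoff. I will argue the $\hum$-unbounded case; the $\rob$-unbounded case is symmetric.

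Since $|\hmessages|\ge|\hobs|$, fix an injective encoding $e:\hobs\to\hmessages$ and restrict attention to policy pairs in which $\hum$ always sends the message $e(\hobv)$ and $\rob$ sends a fixed constant message. Under this protocol the only remaining policy freedom is $\rob$'s action as a function of $(\robv, e(\hobv))$ -- equivalently, of the pair $(\robv,\hobv)$, since $e$ is injective -- together with $\hum$'s action as a function of $\hobv$. This is precisely the policy space of a POSG $\tilde G$ with the same $(\states,\statedist,\util)$ but with an observation structure in which $\rob$ observes the pair $(\robv,\hobv)$ and $\hum$ observes $\hobv$. In $\tilde G$, $\hum$'s observation $\hobv$ is a deterministic function of $\rob$'s observation $(\robv,\hobv)$, so $S\to(\robv,\hobv)\to\hobv$ is trivially a Markov chain and $\hum$ has redundant observations in the sense of \Cref{df:redundant-observations}. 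By \Cref{prop:less-informed}, $\tilde G$ has an optimal policy pair in which $\rob$ never plays $\wait$; lifting it through $e$ yields a policy pair in the POSG-C, realizing the same outcome in every state, in which $\rob$ never defers.

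It remains to show this lifted pair is optimal in the POSG-C, i.e. that the optimal payoff of $\tilde G$ equals that of the full POSG-C. One inequality is immediate: the protocol above is one available option, so the POSG-C optimum is at least that of $\tilde G$. For the reverse, I would observe that in $\tilde G$ the single agent $\rob$ observes the entire joint observation $(\hobv,\robv)$ and, in the optimal pair just constructed, unilaterally determines whether the action goes through; hence $\tilde G$ attains the centralized optimum, the largest expected payoff achievable by any map $d:\hobs\times\robs\to\{\text{act},\text{not act}\}$. No policy pair in the POSG-C can exceed this bound, because in any such pair both agents' messages and actions are functions of $(\hobv,\robv)$ alone, so the induced ``action goes through'' indicator is itself some such $d$. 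Therefore the POSG-C optimum is at most the centralized optimum, which equals the $\tilde G$ optimum, and all three coincide.

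The main obstacle is this last equality of optima: one must check that letting $\rob$ also send messages to $\hum$ cannot help the team beat the centralized optimum. The resolution is that cheap talk adds no information beyond the observations themselves -- every message is a function of the sender's observation -- so the joint outcome of any POSG-C policy pair remains a deterministic function of $(\hobv,\robv)$ and thus cannot outperform the best such function, which $\tilde G$ already achieves. The $\rob$-unbounded case follows the same template: $\rob$ transmits $e(\robv)$, making $\rob$'s observation redundant in the transformed game, and \Cref{prop:less-informed} then supplies an optimal pair in which $\rob$ always defers.
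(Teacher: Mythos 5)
Your proof is correct and takes essentially the same route as the paper: one agent with unbounded communication transmits its entire observation, which makes that agent's observations redundant in the induced game, and \Cref{prop:less-informed} then supplies the desired optimal policy pair. The paper leaves implicit the step you spell out via the centralized-optimum bound (that fixing the full-communication protocol loses no payoff, since every POSG-C outcome is a function of $(\hobv,\robv)$), so your write-up is a more detailed version of the same argument rather than a genuinely different one.
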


\subsection{Communication can have unintuitive effects on shutdown incentives}
In \Cref{prop:hum-informed,prop:rob-informed} players only gained information that the other player did not already know. One might expect that expanding the message set $\rmessages$ makes $\rob$ more likely to defer in optimal policy pairs, since $\rob$ can provide $\hum$ with information that $\rob$ already has. However, the following proposition shows this is not the case.

\begin{restatable}{proposition}{RobMessages}\label{prop:rob-messages}
    There is a PO-OSG-C $(G, \messages)$ with the property that if we replace $\messages$ with a message system that is more expressive for $\rob$, then $\rob$ plays $\wait$ strictly less often in optimal policy pairs.
\end{restatable}

We give an example demonstrating this in \Cref{app:communication}. In doing so, we show an even stronger result: such a PO-OSG-C exists for any value of $\abs{\rmessages}$, and the PO-OSG-C can be constructed such that expanding $\rmessages$ by a single extra message changes $\rob$'s behavior from always playing $\wait$ to playing $\wait$ with arbitrarily low probability.

In the same vein, we may ask if decreasing the size of $\hmessages$ makes $\rob$ more likely to play $\wait$ in optimal policy pairs. This also fails to hold.

\begin{restatable}{proposition}{HumMessages}\label{prop:hum-messages}
    There is a PO-OSG-C $(G, \messages)$ with the property that if we replace $\messages$ with a message system that is less expressive for $\hum$, then $\rob$ plays $\wait$ strictly less often in optimal policy pairs.
\end{restatable}

A proof of \Cref{prop:hum-messages} is given in \Cref{app:communication}.

\section{$\rob$-unaware Human Policies}
\label{sec:naivete}

There is a common theme in the examples above: $\rob$ defers less often to $\hum$ in order to better coordinate with her. Is this coordination the only source of unusual behavior? In this section, we argue that ignoring the effect of coordination cannot save us. All the unintuitive results above hold even when $\hum$ is unaware of $\rob$'s existence.

Moving in the opposite direction to the previous sections, we now break from the model of fully rational $\hum$ and $\rob$ to a model of bounded rationality.  Namely, we study the most basic case of a cognitively bounded $\hum$, in which she ignores $\rob$'s choice of action in choosing her own.

\begin{defn}\label{df:naive}
    We say $\hum$ is \textit{$\rob$-unaware} if $\hum$'s policy is given by:
    $$
        \polh(\obh) = \begin{cases}
            \on &\text{if } \E[\actutil(S)-\offutil(S) \mid \hobv=\obh] > 0, \\
            \off &\text{if } \E[\actutil(S)-\offutil(S) \mid \hobv=\obh] < 0
        \end{cases}
    $$
    and $\hum$ is free to choose arbitrarily if $\E[\actutil(S)-\offutil(S) \mid \hobv=\obh] = 0$.  If $\hum$ is not $\rob$-unaware, we say $\hum$ is \textit{$\rob$-aware}.
\end{defn}

Note that this expectation is $\textit{not}$ conditioned on $\rob$'s action. This is the sense in which $\hum$ is $\rob$-unaware---$\hum$ does not update her beliefs about the possible state based on the fact that $\rob$ has deferred to $\hum$.  This makes coordination between $\hum$ and $\rob$ difficult, and means that they cannot always play an optimal policy pair.  However, we can still define a notion of the \textit{best} policy pair given that $\hum$ is $\rob$-unaware.

\begin{defn}\label{dfn:naive-opp}
    A policy pair $(\polh, \polr)$ is an \textit{$\rob$-aware optimal policy pair} if $\polh$ is the policy of an $\rob$-aware $\hum$ and $\polr$ is a best response to $\polh$. 
\end{defn}

Our motivation for studying the behavior of an $\rob$-unaware $\hum$ is threefold. First, it offers a more realistic model of bounded human cognition.  Previous work has studied \textit{level-}$k$ \textit{thinking} \citep{stahl1994, nagel1995} as an alternative to equilibrium play, where a level-0 player acts randomly and a level-$k$ player best-responds to her opponent assuming she is some level below $k$. An $\rob$-unaware $\hum$ can be thought of as level-1. Experimental work has shown that human players tend to be level-1 or level-2 players when they cannot coordinate beforehand, vindicating our $\rob$-unaware model \citep{camerer2004cognitive, costa-gomes2006cognition}. 
Second, optimal policy pairs with sophisticated $\hum$ might be computationally intractable to find. In \Cref{app:complexity}, we show that the problem of finding an optimal policy pair in PO-OSGs is NP-hard. In contrast, we can find an $\rob$-unaware $\hum$'s policy in polynomial time because she ignores $\rob$'s policy and then calculate $\rob$'s best response also in polynomial time. Finally, discussing an $\rob$-unaware $\hum$ allows us to isolate the effect of communication in PO-OSGs---an $\rob$-unaware $\hum$ ignores all communication from $\rob$, even of the implicit sort considered in \Cref{subsec:implicit}.

\subsection{Making an $\rob$-unaware $\hum$ more informed can decrease payoffs}
In contrast with \Cref{thm:lehrer,thm:comms}, the value of information is \textit{not} necessarily positive when $\hum$ is $\rob$-unaware. This is formalized in \Cref{prop:voi-naive} below. Here, the notion of ``better in $\rob$-unaware optimal play'' is the same as \Cref{df:better-model} except replacing ``optimal policy pairs'' with ``$\rob$-unaware optimal policy pairs.''

\begin{restatable}{proposition}{VoiNaive}\label{prop:voi-naive}
    The following statements hold:
    \begin{enumerate}[(a)]
        \item If an observation structure $\obsstruct$ is more informative for $\rob$ than $\obsstruct'$, then $\obsstruct$ is better in $\rob$-unaware optimal play than $\obsstruct'$.

        \item On the other hand, there is a PO-OSG $G$ such that if one modifies $G$ by making its observation structure strictly more informative for $\hum$, then we obtain a worse expected payoff in $\rob$-unaware optimal policy pairs.

    \end{enumerate}
\end{restatable}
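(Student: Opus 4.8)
The plan is to handle the two parts by quite different means: (a) is a positive result that I would prove by freezing $\hum$ and reducing to a single-agent decision problem, whereas (b) needs an explicit counterexample built around the rigidity of a greedy $\hum$.

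For part (a), the key structural fact is that a garbling which is ``more informative for $\rob$'' (\Cref{def:informative}) is independent with $\nu^\hum(\cdot\mid\obh)=\delta_{\obh}$, so it never touches $\hum$: the conditional law $\hobsdist$ of $\hobv$ given $S$, and hence the joint law of $(S,\hobv)$, is identical under $\obsstruct$ and $\obsstruct'$. Since an $\rob$-unaware $\hum$'s policy (\Cref{df:naive}) is defined purely through the conditional expectations $\E[\actutil(S)-\offutil(S)\mid\hobv=\obh]$, which depend only on that joint law, $\hum$ plays the \emph{same} policy $\polh$ in both games, with the same set of admissible tie-breaks. First I would fix any such $\polh$ and collapse the game to a single-agent decision problem for $\rob$: with $\polh$ frozen, the payoff becomes a function $\tilde{\util}\big((S,\hobv),\raxn\big)$ equal to $\actutil(S)$ for $\raxn=\go$, $\offutil(S)$ for $\raxn=\off$, and ($\actutil(S)$ or $\offutil(S)$ according to whether $\polh(\hobv)=\on$ or $\off$) for $\raxn=\wait$. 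Because the garbling turns $\robv$ into a noisy copy $\robv'$ making $(S,\hobv)\to\robv\to\robv'$ a Markov chain, $\rob$'s signal under $\obsstruct$ Blackwell-dominates its signal under $\obsstruct'$ for the hidden variable $(S,\hobv)$; the classical single-agent non-negativity of the value of information then gives that $\rob$'s best-response value is weakly larger under $\obsstruct$, and taking the optimum over the common set of admissible $\polh$ finishes the part. I would stress that, unlike \Cref{thm:lehrer}, coordination of the garbling is irrelevant here precisely because $\hum$ is frozen, so the problem is genuinely single-agent.

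For part (b) I would exhibit an explicit POSG. Take six equally likely states on a grid with $\offutil\equiv0$, where $\hum$'s fine observation is the row and $\rob$'s observation is the column, and $\actutil$ given by
\[
\begin{array}{c|cc}
 & \obr_1 & \obr_2\\\hline
\obh_1 & +3 & 0\\
\obh_2 & +3 & -5\\
\obh_3 & -4 & 0
\end{array}
\]
Let $G$ be the game in which $\hum$ observes only whether her row lies in $\{\obh_1,\obh_2\}$ or equals $\obh_3$, and modify it so that $\hum$ observes her exact row; this is a deterministic refinement of $\hum$'s partition leaving $\rob$ untouched, hence strictly more informative for $\hum$. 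The mechanism I would highlight is that refining $\hum$'s observation flips her greedy action at $\obh_2$: the pooled bundle $\{\obh_1,\obh_2\}$ has positive mean, so an $\rob$-unaware $\hum$ plays $\on$ there, whereas row $\obh_2$ alone has mean $-1$, so the refined $\hum$ plays $\off$. In column $\obr_1$ the state $(\obh_2,\obr_1)$ pays $+3$, which $\rob$ secures through $\wait$ under the pooled $\hum$ but loses under the refined $\hum$; crucially $\rob$ cannot recover it by playing $\go$, because the entry $(\obh_3,\obr_1)=-4$ makes $\go$ worse than even the degraded $\wait$.

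To finish, I would compute the $\rob$-unaware optimal policy pairs in both games: the routine calculation yields optimal value $1$ when $\hum$ is pooled ($\rob$ plays $\wait$ on $\obr_1$, $\off$ on $\obr_2$) and value $\tfrac12$ when $\hum$ is refined, so strictly increasing $\hum$'s information strictly lowers the $\rob$-unaware optimal payoff. I would also check that every relevant conditional expectation is nonzero, so $\hum$'s policy is uniquely determined and no tie-breaking ambiguity arises. The hard part will be the design of this payoff table, not the verification: one must simultaneously arrange that the pooled bundle $\{\obh_1,\obh_2\}$ is net positive while $\obh_2$ alone is net negative, and that the $\go$ ``safety net'' is made unattractive on $\obr_1$ by a sufficiently negative entry at $\obh_3$ \emph{without} that entry dragging the pooled bundle below zero. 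Balancing these competing inequalities is the only delicate step; part (a), by contrast, is a clean reduction to the single-agent value of information.
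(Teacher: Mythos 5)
Your proposal is correct and follows essentially the same route as the paper: part (a) is the identical reduction to a single-agent decision problem for $\rob$ (with $\hum$'s $\rob$-unaware policy frozen, since the garbling leaves her observations untouched) followed by Blackwell's value-of-information theorem, and part (b) is a $3\times 2$ counterexample built on the same mechanism the paper uses---refining $\hum$'s observation flips her greedy decision on a pooled row from $\on$ to $\off$, sacrificing a positive-payoff state that $\rob$ cannot recover by playing $\go$. Your payoff table differs numerically from the paper's (yielding values $1$ versus $\tfrac12$ rather than $\tfrac12$ versus $\tfrac13$), but the computations check out and the argument is the same.
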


We give the proof in \Cref{app:naivete}.

\Cref{prop:voi-naive}(b) implies that, given the choice of which observation structure to give an $\rob$-unaware $\hum$, $\rob$ could have an incentive to give $\hum$ the less informative one. This result is qualitatively similar to \citet{emmons2024tampering}'s examples of sensor tampering in assistance games.  

\subsection{Information gain can have unintuitive effects on shutdown incentives when $\hum$ is $\rob$-unaware}

Other than \Cref{prop:voi-naive}, the results for $\rob$-unaware $\hum$ in $\rob$-unaware optimal policy pairs are similar to \Cref{sec:soph}: even when deferral \textit{cannot} be implicit communication, making $\hum$ more informed can cause $\rob$ to defer less and making $\rob$ more informed can cause it to defer more.

\begin{restatable}{proposition}{NaiveNotBetter}\label{prop:naive-not-better}
    The following statements hold:
    \begin{enumerate}[(a)]
        \item There is a PO-OSG $G$ with the property that if one modifies $G$ by making its observation structure strictly more informative for $\hum$, then $\rob$ plays $\wait$ less in $\rob$-unaware optimal policy pairs. 
        \item There is a PO-OSG $G'$ with the property that if one modifies $G'$ by making its observation structure strictly less informative for $\rob$, then $\rob$ plays $\wait$ less in $\rob$-unaware optimal policy pairs.
    \end{enumerate}
\end{restatable}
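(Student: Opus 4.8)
The plan is to prove both parts by exhibiting explicit POSGs, exactly as \Cref{prop:hum-informed,prop:rob-informed} are established by example. The crucial simplification relative to \Cref{sec:soph} is that an $\rob$-unaware $\hum$'s policy is a fixed threshold rule: by \Cref{df:naive}, $\polh(\obh)$ depends only on the sign of $\E[\actutil(S)-\offutil(S)\mid\hobv=\obh]$, and in particular does not depend on $\polr$ or on $\rob$'s observation structure at all. Consequently, in any $\rob$-unaware optimal policy pair, $\polh$ is pinned down first, and then at each observation $o$ the robot simply picks the best of three numbers: the value of acting, $G(o)=\E[\actutil(S)\mid\robv=o]$; the value of shutting off, $\E[\offutil(S)\mid\robv=o]$; and the value of waiting, $W(o)=\E[\util(S,\polh(\hobv),\wait)\mid\robv=o]$. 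Taking $\offutil\equiv 0$ throughout, $\rob$ plays $\wait$ at $o$ precisely when $W(o)$ is the largest of $W(o)$, $G(o)$, and $0$. All the work is then in choosing payoffs and observation structures that move these three quantities in the desired direction.

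For part (a), I would reuse \Cref{ex:hum-informed} verbatim. First I would verify that the $\rob$-unaware $\hum$ plays exactly the policy depicted in \Cref{fig:hum-informed-matrices1} under both structures: when $\hum$ sees only the first digit, $\E[\actutil(S)\mid\text{digit }1]=-3/4<0$ and $\E[\actutil(S)\mid\text{digit }2]=3>0$, giving $\off$ then $\on$; when $\hum$ sees the full version number, the three conditional expectations are $-2$, $1/2$, and $3$, giving $\off,\on,\on$. These coincide with the rational human's policies in that example, so $\rob$ faces the same decision problem. Recomputing $\rob$'s best response then gives $W>G$ at both $A$ and $B$ in the less-informed game (so $\rob$ waits on $\{A,B\}$), but $G>W$ at $A$ while $W>G$ at $B$ in the more-informed game (so $\rob$ waits only on $\{B\}$). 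Hence the wait-set strictly shrinks, $\{B\}\subsetneq\{A,B\}$, establishing part (a).

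For part (b), I would build a fresh example, because the decisive feature here is that $\polh$ is literally unchanged between the two $\rob$-observation structures; only $\rob$'s best response moves. The main obstacle is a convexity phenomenon: writing out the garbled posteriors shows that the posterior over states at any coarse observation is a convex combination of the fine posteriors, while the set of posteriors at which $\wait$ is optimal is an intersection of half-spaces and hence convex. Therefore one cannot make $\rob$ stop waiting at an observation by blending two observations at which it already waits; the garbling must pull in mass from an observation at which $\rob$ does \emph{not} wait. Accordingly I would use three robot observations $o_1,o_2,o_3$ and choose payoffs (states of value $\pm 2$ under $o_1,o_2$, and value $-3$ states under $o_3$ that $\hum$ nonetheless allows) so that in the fine game $\rob$ waits at $o_1,o_2$ but plays $\off$ at $o_3$; then I would apply the independent---hence coordinated---garbling relabeling $o_3$ as $o_2$ with probability $t$, leaving $\hum$'s observations untouched, which makes the coarse structure strictly less informative for $\rob$. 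For $t$ sufficiently large the contamination drags $W(o_2)$ below $0$, so $\rob$ switches from $\wait$ to $\off$ at $o_2$, yielding $\{o_1\}\subsetneq\{o_1,o_2\}$. The remaining thing to check is realizability of the fixed $\rob$-unaware policy: since the $o_3$-states carry negative payoff yet must be $\on$-states, I would group them under $\hum$'s observation with enough positive-payoff mass (weighting the good states more heavily in the prior) that the shared conditional expectation stays positive while still permitting a valid $t\in(0,1)$.
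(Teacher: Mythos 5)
Your proposal is correct. For part (a) it coincides with the paper's proof, which likewise reuses \Cref{ex:hum-informed} and checks that the $\rob$-unaware policy equals the human's policy in the optimal pairs (your conditional expectations are right; the paper's stated value $-3$ for $\E[\actutil(S)\mid\hobv=1.0]$ should be $-2$ as you compute, though only the sign matters). For part (b) you take a genuinely different route: the paper simply recycles \Cref{ex:rob-informed}, re-verifying that $\polh(N)=\on$, $\polh(E)=\off$ is the $\rob$-unaware policy and that $\rob$'s best responses to it are exactly those of the optimal pairs, so no new example is needed. You instead build a fresh construction guided by a convexity observation: against the \emph{fixed} $\rob$-unaware $\polh$, the set of posteriors at which $\wait$ is a best response is an intersection of half-spaces, and a garbled observation's posterior is a convex combination of the fine posteriors, so deference at an observation can only be destroyed by mixing in mass from an observation where $\rob$ already declines to wait. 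That is a real addition---it explains why the paper's example must fold the $\go$-observation $M$ into the $\wait$-observation $S$, and it is specific to the $\rob$-unaware setting, since in \Cref{prop:rob-informed} the human's policy can shift with the observation structure and the argument would not apply. Your sketch leaves two routine steps implicit: writing an explicit payoff table, and checking that the probabilistic relabeling of $o_3$ as $o_2$ admits no reverse garbling so that the comparison is strict. Both go through; for instance, payoffs $(+2,+2,-3)$ at $(o_1,o_2,o_3)$ in the row where the $\rob$-unaware $\hum$ plays $\on$ and $(-2,-2,-10)$ in the row where she plays $\off$, with a uniform prior, give deferral values $1,1,-\tfrac{3}{2}$, and any contamination weight $t>\tfrac{2}{3}$ flips $o_2$ from $\wait$ to $\off$, shrinking the wait-set to $\{o_1\}\subsetneq\{o_1,o_2\}$.
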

\begin{proof}[Proof (sketch)]
    The details are described in \Cref{app:naivete}. The examples used to prove \Cref{prop:hum-informed} and \Cref{prop:rob-informed} can be used to prove (a) and (b) respectively. It can be checked that they don't rely on an $\rob$-aware human: for instance, the policy pairs in \Cref{fig:hum-informed-matrices} are optimal regardless of whether the human is aware of $\rob$. \qedhere

\end{proof}

\section{Discussion and Conclusion}\label{sec:conclusion}

We show that even when assuming common payoffs and human rationality, partial observability can cause AIs to avoid shutdown, and basic measures that one might expect to improve the situation can sometimes make the situation worse.

\paragraph{Explaining the Unintuitive Results} What mechanism produces these surprising effects? To answer this question, we must carefully break down the chain that connects private information to shutdown incentives. Making either agent more informed can introduce new subsets of states in which they can choose to play the action. For instance, the additional information in \Cref{fig:hum-informed-matrices1}b allows the agents to take the action in every state except the -5 payoff state, but it is impossible to play the action in exactly that subset of states given the information in \Cref{fig:hum-informed-matrices1}a. Next, an optimal policy pair (OPP) plays the action in the optimal subset of states out of all subsets that are accessible. Policy pairs using a newly available optimal subset can involve the AI waiting more or waiting less. \Cref{fig:hum-informed-matrices1} shows a case where achieving a new optimal subset requires waiting less, while \Cref{fig:rob-informed} in \Cref{app:rob-informed} shows a case that requires waiting more. This chain of effects explains the unintuitive finding that providing either agent with more information is compatible with the AI waiting more or less in OPPs.

\paragraph{Interpreting the Formalism}
Why should we care that $\rob$ sometimes does not defer to $\hum$ in optimal policy pairs (OPPs) of PO-OSGs if these policies (by definition) maximize $\hum$'s payoff? First, it seems helpful to understand shutdown incentives regardless of whether shutdown is good or bad. Second, if we interpret the common payoff function carefully, we find that OPPs are not always desirable. The role of the $\util$ in PO-OSGs is that the players select policies to maximize it. \textbf{If we understand $\util$ as the payoff function closest to what the human acts to maximize, this may not represent $\hum$'s full preferences over outcomes.}

Most payoff function formalisms have expressivity limitations that prevent them from capturing more complex human preferences \citep{abel2021expressivity, pmlr-v216-skalse23a, subramani2024expressivity}. Therefore, maximizing payoffs may not always maximize $\hum$'s overall preferences, and avoiding shutdown to maximize payoffs may be concerning. PO-OSGs thus provide a useful framework to understand when AI assistants are incentivized to avoid shutdown, allowing designers to consider their specific deployment contexts and make the appropriate tradeoff between AI deference and payoff maximization.

\paragraph{Limitations and Future Work} Our work focuses on optimal policy pairs and best responses, which have the advantage of applying generally to any learning algorithm that can find them. However, algorithms that fail to find these optimal solutions may exhibit behavior not captured by our results. We also make several assumptions in our analysis, notably that human feedback is free, there are common payoffs, the game runs for a single round, and the human is rational. Although we expect these assumptions to sometimes fail in practice, the fact that results are unintuitive even in these ideal cases suggests that great care is needed to design AI systems with appropriate shutdown incentives. Still, relaxing these assumptions is an important direction for future work. Exploring shutdown incentives in a sequential setting seems particularly interesting, as prior work has discussed new incentives to avoid shutdown that may arise in this case \citep{freedman2022corrigibility,arbitalfullyupdateddeference}.
Another question for further inquiry is whether the examples we use to prove our counterintuitive results are ``natural’’---that is, do they arise frequently in the real world?
Finally, a promising path is to explore other solution concepts in PO-OSGs, such as perfect Bayesian equilibria when $\hum$ and $\rob$ do not have the same prior over the state, when $\hum$ is irrational, or when the agents are level-$k$ reasoners.

\bibliography{aaai25}

\clearpage

\appendix
\section{Proofs and Example Formalizations for \Cref{sec:soph}}
\label{app:soph}

\subsection{Basic Results on Optimal Policy Pairs}

The first key fact is that in common-payoff Bayesian games, all optimal policy pairs (OPPs) are mixtures of deterministic OPPs.\footnote{We state and prove our results for two-player case, but everything goes through in the obvious ways with more players.}  This justifies our analysis of deterministic OPPs.  We first define common-payoff Bayesian games.

\begin{defn}\label{df:bayesian-game}
    A \textit{common-payoff Bayesian game} is a tuple $G=(N, \states, \obs, \statedist, \obsdist, \axns, u)$, where:
    \begin{itemize}
        \item $N=[n]$ is the set of \textit{players};

        \item $\states$ is the set of \textit{states};

        \item $\obs = \prod_{i\in N}\obs^i$, where $\obs^i$ is the set of possible \textit{observations} (conventionally called \textit{types}) for player $i$;

        \item $\statedist\in\Delta(\states)$ is the distribution of states, which all players take as their prior over the states;

        \item $\obsdist:\states\to\Delta(\obs)$ is the joint distribution of observations conditional upon the state;

        \item $\axns = \prod_i\axns^i$, where $\axns^i$ is the set of actions available to player $i$;

        \item $u:\axns\times\states\times\obs\to\R$ is the common \textit{payoff function} that all players seek to maximize in expectation.
    \end{itemize}
    The game $G$ proceeds as follows:
    \begin{enumerate}
        \item Nature chooses a state $S\sim\statedist$ and observations $\obv\sim\obsdist(\cdot\mid S)$.

        \item Each player $i$ observes only her observation $\obv^i$, the $i$th component of $\obv$, and selects her action $a^i\in\axns^i$.

        \item The actions are executed and each player receives payoff $u((a^i)_{i\in N}, S, O)$.
    \end{enumerate}
\end{defn}

\begin{defn}
    A \textit{stochastic policy} for a player $i$ in a common-payoff Bayesian game is a map $\tilde\pi^i:\obs^i\to \Delta(\axns^i)$. A \textit{deterministic policy} for a player $i$ is a map $\pi^i:\obs^i\to\axns^i$.  We write stochastic policies with the tilde $\sim$ above and deterministic policies without the tilde.  A \textit{stochastic policy profile} $\tilde\pi$ is a tuple $(\tilde\pi^i)_{i\in N}$ of stochastic policies.  A \textit{deterministic policy profile} is defined analogously.
\end{defn}

We shall assume that when players use stochastic policies they randomize independently.  That is, with the stochastic policy profile $\tilde \pi = (\tilde\pi^i)_{i\in N}$, the induced joint policy $\tilde\pi:\obs\to\Delta(\axns)$ is given by $\tilde\pi(\cdot\mid o) = \bigotimes_{i\in N} \tilde\pi(\cdot\mid o^i)$.

\begin{lemma}\label{lm:br}
    Suppose $\axns$ is finite.  Let $\tilde\pi$ be a stochastic policy profile. (a) Player $i$ has a deterministic policy $\pi^i$ that is a best response to $\tilde\pi$. (b) If $\tilde\pi$ is optimal then player $i$ has multiple deterministic best responses unless for each $o^i\in\obs^i$, there is some $a^i\in\axns^i$ such that $\tilde\pi^i(a^i\mid o^i) = 1$.
\end{lemma}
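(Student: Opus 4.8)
The plan is to hold the other players' policies $\tilde\pi^{-i}$ fixed and exploit that, conditional on this, player $i$'s expected common payoff \emph{decomposes across her observations}. For each $o^i\in\obs^i$ with $\P(\obv^i = o^i)>0$ and each $a^i\in\axns^i$, I would define the conditional value
$$Q(a^i,o^i) = \E\big[u((a^i,a^{-i}),S,\obv)\mid \obv^i = o^i\big],$$
where $a^{-i}\sim\tilde\pi^{-i}(\cdot\mid\obv^{-i})$ and the expectation is over $S$ and $\obv^{-i}$ given $\obv^i = o^i$. The key identity is that any policy $\sigma^i$ for player $i$ yields expected common payoff $\sum_{o^i}\P(\obv^i = o^i)\,\E_{a^i\sim\sigma^i(\cdot\mid o^i)}[Q(a^i,o^i)]$, so the conditional distribution at each observation can be optimized independently, and within each observation the objective is \emph{linear} in $\sigma^i(\cdot\mid o^i)\in\Delta(\axns^i)$.

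For part (a), I would use that a linear functional on the simplex $\Delta(\axns^i)$ attains its maximum at a vertex, and that finiteness of $\axns^i$ makes $\arg\max_{a^i}Q(a^i,o^i)$ nonempty. Setting $\pi^i(o^i)$ to any maximizer for each positive-probability observation (and arbitrarily on null observations, which contribute nothing) yields a deterministic policy maximizing every term of the sum at once, hence a deterministic best response.

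For part (b), I would first invoke the common-payoff assumption: since $\tilde\pi$ is optimal over \emph{all} profiles, it is in particular optimal against single-player deviations, so $\tilde\pi^i$ is itself a best response to $\tilde\pi^{-i}$. I then prove the contrapositive of the ``unless'' clause. Suppose $\tilde\pi^i(\cdot\mid o^i_0)$ is not a point mass, so it places positive probability on two distinct actions $a_1,a_2$. If $\P(\obv^i = o^i_0)>0$, optimality forces the support of $\tilde\pi^i(\cdot\mid o^i_0)$ into $\arg\max_{a^i}Q(a^i,o^i_0)$, so $a_1,a_2$ both attain the maximum and swapping between them gives two distinct deterministic best responses. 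If $\P(\obv^i = o^i_0)=0$, that observation contributes nothing, so every action there is a best response and again at least two deterministic best responses exist. Either way player $i$ has multiple deterministic best responses, as (b) claims.

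The decomposition identity and the vertex argument are routine; the care is all in part (b). There I must (i) justify that optimality of the joint profile makes $\tilde\pi^i$ a best response, which is exactly where common payoffs are used, and (ii) treat null observations separately, since there a non-degenerate mixture still yields multiple deterministic best responses even though it is not pinned down by an $\arg\max$ condition. I expect this bookkeeping around zero-probability observations to be the only genuine subtlety.
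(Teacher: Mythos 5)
Your proposal is correct and follows essentially the same route as the paper's proof: decompose the expected common payoff across player $i$'s observations, take a pointwise $\arg\max$ (nonempty by finiteness of $\axns^i$) for part (a), and for part (b) use that optimality of the joint profile under common payoffs forces the support of $\tilde\pi^i(\cdot\mid o^i)$ into the $\arg\max$, so a non-degenerate mixture yields multiple deterministic best responses. Your explicit treatment of zero-probability observations is a small point of added rigor that the paper's argument glosses over, but it does not change the substance of the proof.
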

\begin{proof}
    (a) Fix $o^i\in\obs^i$.  Let $\tilde\pi^{-i}$ be the profile $\tilde\pi$ without player $i$, and then
    $$
        a^i_*\in\text{argmax}_{a^{i}\in\axns^i} \E[u(A^{-i}, a^i, S, O)\mid O^i = o^i],
    $$
    where $A^{-i}\sim\tilde\pi^{-i}(\cdot\mid O)$. The argmax exists because $\axns^i$ is finite.  We claim that $a^i_*$ is a best response to $\tilde\pi^{-i}$ given $o^i$.  Given any best-response distribution $\tilde\pi^i_*(\cdot\mid o)$, we have
    \begin{align*}
        &\E[u(A^{-i}, A^i, S, O)\mid O^i=o^i] \\
        &= \sum_{a^i\in\axns^i}\tilde\pi_*^i(a^i\mid o^i)\E[u(A^{-i}, a^i, S, O)\mid O^i=o^i] \\
        &\leq \sum_{a^i\in\axns^i}\tilde\pi_*^i(a^i\mid o^i)\E[u(A^{-i}, a^i_*, S, O)\mid O^i=o^i] \\
        &= \E[u(A^{-i}, a^i_*, S, O)\mid O^i=o^i],
    \end{align*}
    where $A^{-i}\sim\tilde\pi^{-i}(\cdot\mid O)$ and $A^i\sim\tilde\pi_*^i(\cdot\mid\obv^i)$.  Hence $a^i_*$ is a best response.  Unfixing $o^i$, we can let $\pi^i$ be a deterministic policy that selects a best-response for each observation.  Our work has shown that this policy is a best response.

    \noindent (b) Let $\tilde\pi$ be optimal and let $o^i\in\obs^i$ be such that there is no $a^i\in\axns^i$  with $\tilde\pi^i(a^i\mid o^i) = 1$.  Let $a^i_*\in\axns^i$ be such that $\tilde\pi^i(a^i_*\mid o^i)>0$; our work from (a) implies that 
    $$
        a^i_*\in\text{argmax}_{a^{i}\in\axns^i} \E[u(A^{-i}, a^i, S, O)\mid O^i = o^i],
    $$
    with $A^{-i}$ as before; otherwise, $\tilde\pi^i$ would not be a best response, as $i$ could pursue the same policy but not ever play $a^i_*$ given $o^i$.   Now, our work from (a) shows that playing $a^i_*$ deterministically given $o^i$ is a best response.  Given that multiple $a^i$ satisfy $\tilde\pi^i(a^i\mid o^i)>0$, this choice of $a^i_*$ is not unique.  Selecting one best-response action for each observation $o^i\in\obs^i$ yields a deterministic policy that is a best response; given that the choice of actions is not unique, there are multiple such best responses.
\end{proof}

\begin{defn}
    Let $\tilde\pi$ be a stochastic policy profile.  We say that a deterministic policy profile is \textit{supported by} $\tilde\pi$ if, for all observations $o\in\obs$, we have $\tilde\pi(\pi(o)\mid o) > 0$.  That is, $\tilde\pi$ always plays the actions of $\pi$ with positive probability.
\end{defn}

\begin{lemma}\label{lm:uq-opp}
    Let $\tilde\pi$ be an optimal stochastic policy profile.  There is an optimal deterministic policy profile $\pi$ supported by $\tilde\pi$.  Moreover, unless $\tilde\pi(\cdot\mid o) = \delta_{\pi(o)}$ for each $o\in \obs$, there are multiple optimal deterministic policy profiles supported by $\tilde\pi$.
\end{lemma}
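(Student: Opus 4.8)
The plan is to prove, as the central fact, that the optimal value is a convex combination of the values of exactly the pure profiles in the support of $\tilde\pi$; both claims then follow immediately. For a deterministic profile write its value as $V(\pi) = \E[u(\pi(O), S, O)]$, and for a stochastic profile write $V(\tilde\pi) = \E[u(A, S, O)]$ with $A \sim \tilde\pi(\cdot \mid O)$. Since $N$, $\obs$, and the action sets are finite, $\tilde\pi$ induces a finitely-supported distribution over deterministic profiles: sample $\pi^i(o^i) \sim \tilde\pi^i(\cdot \mid o^i)$ independently over all players $i$ and observations $o^i$, assigning the resulting profile $\pi$ the weight $w(\pi) = \prod_{i, o^i} \tilde\pi^i(\pi^i(o^i) \mid o^i)$. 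Because $\obs = \prod_i \obs^i$, every $o^i$ occurs as a coordinate of some $o$, so $w(\pi) > 0$ holds exactly when $\pi$ is supported by $\tilde\pi$ in the sense of the preceding definition; and $\sum_\pi w(\pi) = \prod_{i,o^i}\sum_{a^i}\tilde\pi^i(a^i\mid o^i) = 1$.

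First I would establish the identity $V(\tilde\pi) = \sum_\pi w(\pi)\, V(\pi)$. This is a direct computation: writing $V(\tilde\pi) = \sum_{s,o} \statedist(s)\,\obsdist(o \mid s)\,\E_{A \sim \tilde\pi(\cdot\mid o)}[u(A, s, o)]$ and using independence of randomization, $\tilde\pi(\cdot \mid o) = \bigotimes_i \tilde\pi^i(\cdot \mid o^i)$, the inner expectation at a fixed $o$ equals $\sum_\pi w(\pi)\, u(\pi(o), s, o)$, because the factors of $w$ at information sets other than the $(i, o^i)$ relevant to this $o$ each marginalize to one. Summing over $s,o$ and reindexing yields the identity. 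The one point requiring care is the bookkeeping that $w$ factorizes across information sets and that the irrelevant factors sum to one; this is routine but is the step most prone to indexing slips.

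Given the identity, the first claim is immediate. Since $\tilde\pi$ is optimal, $V(\tilde\pi) \ge V(\pi)$ for every $\pi$; but $V(\tilde\pi)$ is a convex combination, with weights $w(\pi)$, of the values of precisely the profiles supported by $\tilde\pi$. A maximum that equals a convex combination of quantities each at most the maximum forces every term of positive weight to attain that maximum. Hence \emph{every} deterministic profile supported by $\tilde\pi$ is optimal, and choosing, at each information set, any action in the support of $\tilde\pi^i(\cdot\mid o^i)$ produces one such profile.

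For the second claim, suppose the exceptional condition fails, i.e., $\tilde\pi(\cdot\mid o) \ne \delta_{\pi(o)}$ for some $o$. Since a product distribution is a Dirac measure iff each factor is, some factor $\tilde\pi^i(\cdot \mid o^i)$ is not a point mass and thus assigns positive probability to two distinct actions $a_1 \ne a_2$. Fixing a supported choice at every other information set and taking $\pi^i(o^i) = a_1$ versus $\pi^i(o^i) = a_2$ yields two \emph{distinct} deterministic profiles, both supported by $\tilde\pi$ and hence both optimal by the first part; this exhibits multiple optimal supported profiles. I expect the only real obstacle to be stating the averaging identity cleanly, as everything else is a one-line consequence once it is in hand. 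One could alternatively extract the per-information-set best-response structure from \Cref{lm:br}, but the averaging identity delivers the multiplicity for free.
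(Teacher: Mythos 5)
Your proof is correct, but it takes a genuinely different route from the paper's. The paper purifies $\tilde\pi$ sequentially: it cycles through the players, replacing each player's stochastic policy in turn with a deterministic best response (which exists by \Cref{lm:br}(a)); since replacing one player's policy by a best response cannot decrease the common payoff, each intermediate profile remains optimal, and the multiplicity clause is inherited from \Cref{lm:br}(b). You instead prove the global averaging identity $V(\tilde\pi)=\sum_\pi w(\pi)V(\pi)$ with product weights $w(\pi)=\prod_{i,o^i}\tilde\pi^i(\pi^i(o^i)\mid o^i)$ supported exactly on the profiles supported by $\tilde\pi$, and conclude by convexity that \emph{every} supported deterministic profile is optimal; the multiplicity clause then follows by toggling a single non-degenerate coordinate between two supported actions. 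Your conclusion is strictly stronger than the lemma as stated, and it is precisely the ``strengthening of \Cref{lm:uq-opp}'' that the paper invokes but declines to prove in its sketch of \Cref{lm:dopps}, so your route delivers that lemma essentially for free (and your normalization of the weights, with one factor per information set $(i,o^i)$ rather than one per joint observation $o\in\obs$, is the one that actually sums to one). What the paper's approach buys in exchange is reuse of \Cref{lm:br} and independence from the finiteness of $\obs$ --- your decomposition into finitely many pure profiles with positive product weights genuinely needs $\obs$ finite, whereas the sequential best-response argument needs only finitely many actions. Within the paper's standing finiteness assumption, both arguments are complete.
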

\begin{proof}
    Let $\tilde\pi$ be an optimal stochastic policy profile.  Consider the following algorithm: Let $\tilde\pi_0 = \tilde\pi$ and for each $i\in N=[n]$, let $\tilde\pi_i$ be $\tilde\pi_{i-1}$ except that player $i$ plays according to some deterministic policy $\pi^i$ that is a best response to $\tilde\pi^{i-1}$ (which exists by \Cref{lm:br}(a)); return $\tilde\pi^n$.  By construction, $\tilde\pi^n$ almost surely plays the same action as $\pi = (\pi^i)_{i\in N}$.  We can see inductively that each profile $\tilde\pi^i$ is optimal; $\tilde\pi^0$ is by supposition, and each successive one is optimal because we replace one player's strategy with a best-response, which cannot decrease expected utility.  By \Cref{lm:br}(b), this construction is not unique unless $\tilde\pi(\cdot\mid o) = \delta_{\pi(o)}$ for each $o\in\obs$.
\end{proof}

For our purpose, the important corollary is as follows.

\begin{corollary}\label{cor:uniquedopp}
    If a Bayesian game with finite $\axns$ has a unique optimal deterministic policy profile, then this is the only optimal policy profile (deterministic or not).  Moreover, an optimal deterministic policy profile exists.
\end{corollary}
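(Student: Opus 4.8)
The plan is to read off \Cref{cor:uniquedopp} as an essentially immediate consequence of \Cref{lm:uq-opp}, with one extra ingredient needed for the existence claim. There are two assertions to establish: (i) that an optimal deterministic policy profile exists, and (ii) that if the optimal deterministic profile is unique, then no other policy profile (stochastic or deterministic) is optimal. I would handle existence first and then uniqueness, since the uniqueness argument presupposes that we can speak of an optimal profile at all.

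For existence, I would first argue that some optimal stochastic policy profile exists. Because $\obs$ and $\axns$ are finite, the set of stochastic policy profiles is a finite product of probability simplices $\prod_{i\in N}\prod_{o^i\in\obs^i}\Delta(\axns^i)$, hence a compact subset of a finite-dimensional Euclidean space; and the expected payoff is multilinear in the profile, hence continuous. A continuous function on a compact set attains its maximum, so an optimal stochastic profile $\tilde\pi$ exists. Applying the first part of \Cref{lm:uq-opp} to $\tilde\pi$ yields an optimal deterministic profile supported by it, which establishes existence.

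For uniqueness, suppose $\pi^*$ is the unique optimal deterministic profile, and let $\tilde\pi$ be any optimal policy profile; I want to conclude $\tilde\pi=\pi^*$. By \Cref{lm:uq-opp} there is an optimal deterministic profile $\pi$ supported by $\tilde\pi$, and uniqueness forces $\pi=\pi^*$. Now I invoke the contrapositive of the ``moreover'' clause of \Cref{lm:uq-opp}: if $\tilde\pi(\cdot\mid o)\ne\delta_{\pi(o)}$ for some $o\in\obs$, then there would be multiple optimal deterministic profiles supported by $\tilde\pi$, contradicting the assumed uniqueness. Hence $\tilde\pi(\cdot\mid o)=\delta_{\pi^*(o)}$ for every $o$, i.e., $\tilde\pi$ is the degenerate stochastic profile that always plays $\pi^*$. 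Thus $\pi^*$ is the only optimal profile.

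The argument is mostly bookkeeping layered on \Cref{lm:uq-opp}, so I do not expect a genuine obstacle. The one point requiring care is the existence claim: the statement is phrased with ``finite $\axns$,'' but to guarantee that the supremum of expected payoff is attained I am implicitly using finiteness of $\obs$ (and $\states$) as well, which holds throughout for the finite POSGs under study. The only other subtlety is to apply the ``moreover'' clause in contrapositive form correctly, making sure that ``supported by $\tilde\pi$'' is exactly the condition that lets a second deterministic best response be extracted whenever $\tilde\pi$ is non-degenerate.
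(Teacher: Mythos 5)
Your proposal is correct and follows essentially the same route as the paper: both deduce existence from the existence of an optimal stochastic profile (you just spell out the compactness/continuity argument the paper leaves implicit) plus the first part of Lemma \ref{lm:uq-opp}, and both get uniqueness from the contrapositive of the ``moreover'' clause of that lemma. Your remark that finiteness of $\obs$ is also being used for existence is a fair and accurate observation.
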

\begin{proof}
    Uniqueness immediately follows from \Cref{lm:uq-opp}: If there is a unique optimal deterministic policy profile $\pi$, then any optimal stochastic policy profile is of the form $\tilde\pi(\cdot\mid o) = \delta_{\pi(o)}$, which almost surely plays the same actions as $\pi$.  Existence follows because, with finitely many actions, there exists an optimal stochastic policy profile $\tilde\pi$; \Cref{lm:uq-opp} then implies that there is an optimal deterministic policy profile supported by $\tilde\pi$. 
\end{proof}

Although all we need is \Cref{cor:uniquedopp}, we also sketch how each optimal stochastic policy profile is a mixture of optimal deterministic policy profiles.

\begin{defn}
    A stochastic policy profile $\tilde\pi$ is a \textit{mixture} of deterministic policy profiles $\{\pi_j\}_{j\in \mathcal{J}}$ where $\mathcal{J}$ is an index set if, for any tuple of observations $o\in\obs$, we have $\tilde\pi(\cdot\mid o)= \P(\pi_J(\cdot) =  o)$, where $J\in \mathcal{J}$ is a random index (not necessarily uniformly distributed) independent of all other random variables.
\end{defn}

\begin{lemma}\label{lm:dopps}
    Consider a common-payoff Bayesian game such that $\axns$ and $\obs$ are finite.  Every optimal stochastic policy profile is a mixture of optimal deterministic policy profiles.
\end{lemma}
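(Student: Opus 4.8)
The plan is to exploit the fact that, because players randomize independently, the expected payoff
$$V(\tilde\pi) := \E_{S,O}\,\E_{A\sim\tilde\pi(\cdot\mid O)}\bigl[u(A,S,O)\bigr]$$
is \emph{multilinear} in the profile $(\tilde\pi^1,\dots,\tilde\pi^n)$: holding all but one player's policy fixed, $V$ is affine in the remaining player's policy, since $\tilde\pi(\cdot\mid o)=\bigotimes_{i}\tilde\pi^i(\cdot\mid o^i)$ and $u$ enters only through a linear expectation. For each player $i$ the set of stochastic policies is the polytope $\prod_{o^i\in\obs^i}\Delta(\axns^i)$, whose vertices are exactly the deterministic policies; because $\axns$ and $\obs$ are finite, this polytope is the convex hull of finitely many deterministic policies. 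A multilinear function on a product of polytopes attains its maximum at a product of vertices (optimize one coordinate at a time, each being an affine maximization over a simplex), so the optimal value $V^\star := \max V$ is attained at some deterministic profile.

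First I would decompose each player's policy into deterministic ones. Writing $D^i$ for the finite set of deterministic policies of player $i$, set $q^i_{\rho}:=\prod_{o^i\in\obs^i}\tilde\pi^i\bigl(\rho(o^i)\mid o^i\bigr)$ for $\rho\in D^i$. These weights are nonnegative, sum to $1$, and satisfy $\tilde\pi^i(\cdot\mid o^i)=\sum_{\rho\in D^i}q^i_\rho\,\delta_{\rho(o^i)}$ for every $o^i$, so $\tilde\pi^i=\sum_{\rho}q^i_\rho\,\rho$ as a convex combination. Substituting these decompositions and expanding by multilinearity gives
$$V(\tilde\pi)=\sum_{(\rho_1,\dots,\rho_n)\in D^1\times\cdots\times D^n}\Bigl(\prod_{i\in N}q^i_{\rho_i}\Bigr)\,V(\rho_1,\dots,\rho_n),$$
a convex combination of payoffs of deterministic profiles. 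Since each term is at most $V^\star$ and $\tilde\pi$ is optimal, i.e.\ $V(\tilde\pi)=V^\star$, every deterministic profile $(\rho_1,\dots,\rho_n)$ carrying positive weight $\prod_i q^i_{\rho_i}>0$ must itself attain $V^\star$ and hence be optimal.

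To finish, I would take $J=(J_1,\dots,J_n)$ to be a random index with \emph{independent} coordinates distributed as $\P(J_i=\rho)=q^i_\rho$. Then $\pi_J:=(\rho_{J_1},\dots,\rho_{J_n})$ is almost surely an optimal deterministic profile, and independence of the coordinates yields, for every $o$ and $a$,
$$\P\bigl(\pi_J(o)=a\bigr)=\prod_{i\in N}\P\bigl(\rho_{J_i}(o^i)=a^i\bigr)=\prod_{i\in N}\tilde\pi^i(a^i\mid o^i)=\tilde\pi(a\mid o),$$
so $\tilde\pi$ is a mixture of optimal deterministic profiles in the required sense.

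The main obstacle is ensuring the mixture respects the independent product structure of stochastic policy profiles: a generic mixture of deterministic profiles introduces correlation across players that a product policy $\bigotimes_i\tilde\pi^i$ cannot exhibit. This is precisely why I decompose each player separately and take $J$ to have independent coordinates, making the induced distribution over profiles a product measure whose per-observation marginal factorizes to match $\bigotimes_i\tilde\pi^i$. The remaining steps---the vertex characterization of the policy polytope and the vertex maximization of a multilinear objective---are routine.
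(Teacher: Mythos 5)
Your proof is correct, and it is built around the same core decomposition as the paper's: assign to each deterministic profile a weight given by the product of the probabilities that $\tilde\pi$ places on the actions that profile prescribes, observe that these weights reconstruct $\tilde\pi$ observation-by-observation, and conclude that every profile in the support must be optimal. The differences are worth noting, and they favor your version. First, the paper defines its weights $p_j=\prod_{o\in\obs}\tilde\pi(\pi_j(o)\mid o)$ over \emph{joint} observation tuples and verifies the mixture identity by summing over arbitrary assignments in $\axns^{\obs\setminus\{o\}}$, whereas you decompose each player's policy separately into vertices of $\prod_{o^i}\Delta(\axns^i)$ and then take an index $J=(J_1,\dots,J_n)$ with independent coordinates; your bookkeeping respects the product structure of policy profiles more transparently. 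Second, and more substantively, the paper's proof is explicitly a sketch: it states that optimality of each deterministic profile in the support ``needs a strengthening of \Cref{lm:uq-opp} which we do not prove here.'' You close exactly that gap with the multilinearity argument---$V(\tilde\pi)$ expands as a convex combination $\sum_{(\rho_1,\dots,\rho_n)}\bigl(\prod_i q^i_{\rho_i}\bigr)V(\rho_1,\dots,\rho_n)$ with every term bounded by $V^\star$, so optimality of $\tilde\pi$ forces every positive-weight term to attain $V^\star$. This is clean, elementary, and self-contained. The only cosmetic imprecision is describing the per-player policy set as ``a simplex'' when it is a product of simplices $\prod_{o^i}\Delta(\axns^i)$, but an affine function still attains its maximum at a vertex of such a product, so nothing breaks.
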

\begin{proof}[Proof (sketch)]
    Let $\tilde\pi$ be an optimal stochastic policy profile.  Because $\obs$ and $\axns$ are finite, there are only finitely many deterministic policy profiles $\pi_1,\dots,\pi_m$.  Let
    $$
        p_j = \prod_{o\in\obs} \tilde\pi(\pi_j(o)\mid o).
    $$
    Let $\mathcal{J} = \{j\in[m]: p_j > 0\}$.  The trick is showing that $\tilde\pi$ is a mixture of $\{\pi_j\}_{j\in\mathcal{J}}$ and that each of this deterministic policy profiles is optimal.
    
    We first show that $\tilde\pi$ is a mixture.  Let $J$ be a random variable such that
    $$
        \P(J = j) = \begin{cases}
            p_j & \text{if } j\in \mathcal{J}, \\
            0 & \text{otherwise},
        \end{cases}
    $$
    that is independent of all other random variables.  Intuitively, $\pi_J$ is the deterministic policy profile we get by randomly choosing one tuple of actions for each tuple of observations according to the distribution specified by $\tilde\pi$.  In particular, we have by construction that $\tilde\pi(\cdot\mid o) = \P(\pi_J(o) = \cdot)$.  Formally,
    for any $o\in\obs$ and $a\in\axns$, we have
    \begin{align*}
        \P(\pi_J(o) = a) &= \sum_{j\in\mathcal{J}} p_j \I(\pi_j(o) = a) \\
        &= \tilde\pi(a\mid o)\sum_{j\in\mathcal{J}} \I(\pi_j(o) = a) \prod_{o'\neq o}\tilde\pi(\pi_j(o')\mid o') \\
        &= \tilde\pi(a\mid o)\sum_{\mathbf{a}\in\axns^{\obs\setminus\{o\}}}\prod_{o'\neq o}\tilde\pi(\mathbf{a}(o')\mid o') \\
        &= \tilde\pi(a\mid o)\prod_{o'\neq o}\sum_{a\in\axns} \tilde\pi(a\mid o') \\
        &= \tilde\pi(a\mid o).
    \end{align*}
    To show optimality of each deterministic profile, we need a strengthening of \Cref{lm:uq-opp} which we do not prove here.
\end{proof}

The relevance of all this work is that PO-OSGs are Bayesian games.  Although we state that PO-OSGs are \textit{dynamic} Bayesian games, we can write them as simultaneous games, just as how in games of complete information we can write extensive form games in normal form.  The dynamic nature of PO-OSGs could be useful in future work to study non-optimal policy profiles, such as perfect Bayesian equilibria \citep{fudenberg1991games}.

\subsection{Proof of \Cref{prop:less-informed}}

\Cref{prop:less-informed} states that if either player has redundant observations, there is an optimal policy pair (OPP) in which the other player always makes the final decision. To build up to that result, we will first define a few new terms and prove some intermediate results. The overall idea is simple: when one player knows everything about the state that the other player knows, the more knowledgeable player can act unilaterally, and there is no chance that they make a mistake that the other agent could have fixed.

\begin{defn}\label{df:knows}
We say that $\rob$ \textit{knows} $\hum$\textit{'s observation given} $\robs_*\subseteq\robs$ if there is some $f:\robs_*\to\hobs$ such that $\hobv=f(\robv)$ given that $\robv\in\robs_*$.  We define $\hum$ knowing $\rob$'s observation analogously.  Moreover, we say that $\rob$ \textit{knows that} $\hum$ \textit{knows} $\rob$\textit{'s observation given} $\robs_*\subseteq\robs$ if there is $\hobs_*\subseteq\hobs$ such that (1) $\hum$ knows $\rob$'s observation given $\hobs_*$ and (2) $\rob$ can deduce that $\hum$ knows its observation: $\hobv\in\hobs_*$ given that $\robv\in\robs_*$.
\end{defn}

\begin{proposition}\label{prop:limit-result}
Fix any PO-OSG.
\begin{enumerate}[(a)]
    \item If $\rob$ knows $\hum$'s observation given $\robs_*\subseteq\robs$, then for every deterministic OPP $(\polh, \polr)$ there exists an OPP $(\polh, \polr_*)$ in which $\wait\notin\polr_*(\robs_*)$.  

    \item If $\rob$ knows that $\hum$ knows $\rob$'s observation given $\robs_*\subseteq\robs$, then for every deterministic OPP $(\polh,\polr)$ there exists an OPP $(\polh, \polr_*)$ in which $\polr_*(\robs_*) = \{\wait\}$.
\end{enumerate}
\end{proposition}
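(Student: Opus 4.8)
The plan is to prove both parts by an exchange argument on $\rob$'s policy, using the fact that in a common-payoff game a policy pair is an OPP exactly when it attains the global maximum expected payoff. So in each case it suffices to exhibit a modified pair, with the prescribed waiting behaviour on $\robs_*$, whose expected payoff still equals the maximum. In part (a) I would make $\rob$ directly imitate the action the more-informed $\hum$ would take; in part (b) I would instead make $\rob$ hand every $\robs_*$-decision to the strictly better-informed $\hum$ and check this cannot lose value.

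For (a), let $f:\robs_*\to\hobs$ witness that $\rob$ knows $\hum$'s observation, so $\hobv=f(\robv)$ almost surely whenever $\robv\in\robs_*$. Given a deterministic OPP $(\polh,\polr)$, define $\polr_*$ to agree with $\polr$ except on $\{\obr\in\robs_*:\polr(\obr)=\wait\}$, where I set $\polr_*(\obr)=\go$ if $\polh(f(\obr))=\on$ and $\polr_*(\obr)=\off$ if $\polh(f(\obr))=\off$. Conditioning on $\robv=\obr$ for such an $\obr$, we have $\hobv=f(\obr)$ almost surely, so under $(\polh,\polr)$ the action goes through iff $\polh(f(\obr))=\on$, yielding $\actutil(S)$, and otherwise $\offutil(S)$; under $(\polh,\polr_*)$ the same events yield the same payoffs by construction. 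Hence the contribution to expected payoff from each such $\obr$ is unchanged, the total is unchanged, and $(\polh,\polr_*)$ is again an OPP with $\wait\notin\polr_*(\robs_*)$ (since $\polr_*(\robs_*)\subseteq\{\go,\off\}$).

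For (b), I would first trim $\hobs_*$ to the set $\hobs_*'$ of human observations that actually occur when $\robv\in\robs_*$; condition (1) still holds for $\hobs_*'$, and combined with condition (2) this gives the clean equivalence $\{\robv\in\robs_*\}=\{\hobv\in\hobs_*'\}$ up to null events, with $\robv=g(\hobv)$ on this event $E$. The key consequence is a decoupling: the payoff on $E$ is governed only by $\polr|_{\robs_*}$ and $\polh|_{\hobs_*'}$, and the payoff on $E^c$ only by the remaining coordinates, so the two pieces can be optimized independently, and a global OPP must be optimal on each. On $E$, since $\hobv$ determines $\robv=g(\hobv)$, the tower property gives $\E[\actutil(S)\mid\robv=\obr]=\E[\,\E[\actutil(S)\mid\hobv]\mid\robv=\obr]$ and similarly for $\offutil$; writing $X=\E[\actutil(S)\mid\hobv]$ and $Y=\E[\offutil(S)\mid\hobv]$, the inequality $\E[\max(X,Y)\mid\robv=\obr]\ge\max(\E[X\mid\robv=\obr],\E[Y\mid\robv=\obr])$ shows that $\rob$ playing $\wait$ while $\hum$ takes the full-information-optimal action ($\on$ iff $X\ge Y$) weakly dominates $\go$, $\off$, and $\wait$ under any other response. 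Thus waiting everywhere on $\robs_*$ attains the conditional maximum on $E$, and keeping the original OPP on $E^c$ attains it there, giving a pair attaining the global maximum with $\polr_*(\robs_*)=\{\wait\}$.

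The main obstacle is reconciling this with the requirement that the human policy remain the given $\polh$. The construction needs $\hum$ to play the full-information-optimal action on $\hobs_*'$, but an arbitrary OPP need not: wherever the original $\rob$ plays $\go$ on $\robs_*$, the action goes through regardless of $\hum$, so $\polh$ is entirely unconstrained on the corresponding observations and may be suboptimal there. I would resolve this by observing that on exactly these observations $\hum$'s action is payoff-irrelevant in $(\polh,\polr)$, so replacing it by the full-information-optimal action leaves the expected payoff — hence the property of being an OPP — unchanged; after this harmless normalization we may assume $\polh$ is already full-information-optimal on $\hobs_*'$, and then the $\polr_*$ constructed above works with the same $\polh$. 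Pinning down this normalization (and confirming the decoupling survives it) is the delicate step, whereas the Jensen-type domination is routine once the event $E$ and its decoupling are in place.
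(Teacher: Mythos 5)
Your part (a) is essentially the paper's proof: the paper likewise replaces $\polr$ on $\robs_*$ by the deterministic action $\go$ or $\off$ that reproduces the outcome $(\polh,\polr)$ would have produced (it phrases this via the indicator $\actindic(\polh(f(\obr_*)),\polr(\obr_*))$, which coincides with your case split on $\polh(f(\obr))$ at the $\wait$ observations and leaves the others unchanged). Nothing to add there.

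For part (b) the paper offers only a two-sentence sketch---``$\rob$ can always play $\wait$ \dots\ and $\hum$ can simply take the optimal action''---which quietly modifies the human's policy. You correctly identify that this is the crux, and your decoupling over the event $E=\{\robv\in\robs_*\}=\{\hobv\in\hobs_*'\}$ together with the Jensen-type domination is a sound, more careful version of that sketch. However, your ``harmless normalization'' does not rescue the statement as literally written: it replaces $\polh$ by a different policy $\polh'$ on the observations where $\hum$ was payoff-irrelevant, so what you ultimately produce is an OPP $(\polh',\polr_*)$, not $(\polh,\polr_*)$. This cannot be repaired, because the literal claim is false. Take two equally likely states with $\actutil$ equal to $+1$ in one and $-1$ in the other, $\offutil\equiv 0$, both agents observing the state exactly, and $\robs_*=\robs$. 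The pair in which $\rob$ plays $\go$ in the good state, $\off$ in the bad one, and $\polh\equiv\off$ is a deterministic OPP with payoff $\nicefrac{1}{2}$, yet any pair with that same $\polh$ and $\polr_*(\robs_*)=\{\wait\}$ earns $0$. So your argument in fact proves the corrected statement (allowing $\polh$ to be adjusted where it is payoff-irrelevant under the original pair), which is also all the paper's own sketch establishes; just do not present the normalization as preserving ``the same $\polh$.''
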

\begin{proof}[Proof]
    (a) Suppose $\rob$ knows $\hum$'s observation given $\robs_*$. Let $f:\robs_*\to \hobs$ map each $\obr_*\in\robs_*$ to the unique $\obh_*$ such that $\P(\hobv=\obh_*\mid \robv=\obr_*)=1$.  Let $(\polh,\polr)$ be a deterministic optimal policy pair.  Now define the policy $\polr_*$ to equal $\polr$ except on $\robs_*$, where for $\obr_*\in\robs_*$,
    $$
        \polr_*(\obr_*) = \begin{cases}
            \go & \text{if } \actindic(\polh(f(\obr_*)),\polr(\obr_*))=1, \\
            \off & \text{otherwise}.
        \end{cases}
    $$
    Recall that $\actindic$ is the indicator that the action goes through, and note that possibly $\polr_*=\polr$. In other words, for $\obr_*\in\robs_*$, $\rob$ knows $\hum$'s observation and can unilaterally take the action $(\polh,\polr)$ would have.  This is what $\polr_*$ does.  Hence $(\polh,\polr_*)$ achieves the the same expected payoff as $(\polh,\polr)$ and is optimal even though $\polr_*$ never waits given observations in $\robs_*$.

    \noindent(b) If $\rob$ knows that $\hum$ knows $\rob$'s observation given $\robs_*$, then $\rob$ can always play $\wait$ when it sees an observation in $\robs_*$ and given that $\hum$ knows $\rob$'s observation, $\hum$ can simply take the optimal action. The details are similar to (a), so we omit them.
\end{proof}

\Cref{prop:limit-result} examines the local case about incentives to play $\wait$ given particular observations, and is neither strictly more general nor strictly less general than \Cref{prop:less-informed}.  What if one side knows the other's observations regardless of what they are?

\begin{defn}
    We say that $\hum$ \textit{has no private observations} if there is a function $f:\robs \to \hobs$ such that $\hobv = f(\robv)$. In other words, $\rob$ can determine $\hum$'s observation from $\rob$'s own observation. We define when $\rob$ has no private observations analogously.
\end{defn}

For example, in the off-switch game of \citet{hadfield2017off}, $\rob$ has no private observations.  By contrast, $\hum$ has private observations: her own preferences.  

This next result shows that, if one side has no private observations, then $\rob$ should either always or never defer to $\hum$.  It strengthens the main result of \citet{hadfield2017off}: even if $\hum$ has incomplete information, $\rob$ can still always defer to $\hum$ in optimal play as long as $\hum$ knows everything $\rob$ does.

\begin{restatable}{proposition}{FullLimit}
 \label{prop:full-limit}
If $\rob$ (resp. $\hum$) has no private observations, then there is an optimal policy pair in which $\rob$ always (resp. never) plays $\wait$.
\end{restatable}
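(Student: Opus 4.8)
The plan is to derive \Cref{prop:full-limit} as a purely global instance of the local result \Cref{prop:limit-result}, obtained by taking the relevant observation subset to be the entire observation set. First I would invoke \Cref{cor:uniquedopp} to guarantee that a deterministic optimal policy pair $(\polh,\polr)$ exists, so that there is something to feed into \Cref{prop:limit-result}. Everything then reduces to checking that the two ``no private observations'' hypotheses are exactly the global forms of the ``knows'' hypotheses in \Cref{df:knows}.

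For the case that $\rob$ never needs to wait (when $\hum$ has no private observations): by definition this means there is $f:\robs\to\hobs$ with $\hobv=f(\robv)$, which is precisely the statement that $\rob$ knows $\hum$'s observation given $\robs_*=\robs$ in the sense of \Cref{df:knows}. Applying \Cref{prop:limit-result}(a) with $\robs_*=\robs$ then yields an OPP $(\polh,\polr_*)$ with $\wait\notin\polr_*(\robs)$, i.e.\ $\rob$ never plays $\wait$.

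For the case that $\rob$ can always wait (when $\rob$ has no private observations): here there is $g:\hobs\to\robs$ with $\robv=g(\hobv)$, so $\hum$ knows $\rob$'s observation given $\hobs_*=\hobs$. I would then promote this to ``$\rob$ knows that $\hum$ knows $\rob$'s observation given $\robs_*=\robs$'': condition (1) of the relevant clause of \Cref{df:knows} is exactly what was just verified, and condition (2), that $\hobv\in\hobs_*$ whenever $\robv\in\robs_*$, is automatic since $\hobs_*=\hobs$. Applying \Cref{prop:limit-result}(b) with $\robs_*=\robs$ then gives an OPP $(\polh,\polr_*)$ with $\polr_*(\robs)=\{\wait\}$, i.e.\ $\rob$ always waits.

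The only point requiring care---though it is bookkeeping rather than a genuine obstacle---is confirming that the unconditional ``no private observations'' hypotheses translate cleanly into the conditional ``knows'' hypotheses with the full sets $\robs_*=\robs$ and $\hobs_*=\hobs$. In particular, the ``knows that knows'' condition needed for part (b) holds for free precisely because its second requirement becomes vacuous once $\hobs_*$ is all of $\hobs$; this is what makes the argument for the always-wait direction no harder than the never-wait direction, despite its extra quantifier layer.
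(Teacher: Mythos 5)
Your proposal is correct and follows essentially the same route as the paper: both reduce \Cref{prop:full-limit} to \Cref{prop:limit-result} by instantiating the ``knows'' conditions of \Cref{df:knows} with the full observation sets, noting that the second clause of the ``knows that knows'' condition is vacuous when $\hobs_*=\hobs$. Your version is if anything slightly more careful than the paper's, since you explicitly invoke \Cref{cor:uniquedopp} to ensure a deterministic optimal policy pair exists before applying \Cref{prop:limit-result}.
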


\begin{proof}\label{app:full-limit-proof}
    First suppose that $\rob$ has no private observations, and let $f:\hobs\to\robs$ be such that $\robv = f(\hobv)$.  By \Cref{prop:limit-result}, it suffices to show that $\rob$ knows $\hum$'s observation given $\robs$.  The existence of $f$ shows that $\hum$ knows $\rob$'s observation given $\hobs$.  The condition that $\hobv\in\hobs$ given that $\robv\in\robs$ holds trivially because $\hobv$ is $\hobs$-valued.  The case where $\hum$ has no private observations is immediate from \Cref{prop:limit-result}, as $\rob$ knows $\hum$'s observation given $\robs$.
\end{proof}

Now we can prove \Cref{prop:less-informed}. Recall that we define the notion of redundant observations in \Cref{df:redundant-observations}.

\LessInformed*

\begin{proof}\label{app:less-informed}
    We'll show the case for $\rob$ having redundant observations; the proof for $\hum$ having redundant observations holds, \textit{mutatis mutandis}.  Let $G$ be a PO-OSG with observation structure $\mathcal{O}=(\hobs,\robs,\obsdist)$ such that $\rob$ has redundant observations.  Consider the PO-OSG $G'$ that is the same as $G$ except that $\robv=\hobv$, i.e. the assistant's observations are modified to be identical to the human's observations.  In $G'$, $\rob$ has no private observations, so \Cref{prop:full-limit} implies that there is an optimal policy pair $\pi$ in which $\rob$ always plays $\wait$. We will show that $\pi$ is optimal in $G$.  Let $\nu$ be the independent garbling defined by $\nu(\cdot\mid\obh,\obr)=\delta_{\obh}\otimes \robsdist(\cdot\mid \hobv=\obr)$. Applying $\nu$ to the observation structure of $G'$ produces $\mathcal{O}$, so by \Cref{thm:lehrer}, the expected payoff from optimal policy pairs in $G$ cannot be greater than the expected payoff from optimal policy pairs in $G'$.  In $G$, the pair $\pi$ produces the same expected payoff as in $G'$, as the players play the same actions given the same observations for $\hum$, whose joint distribution with $S$ hasn't changed.  Hence $\pi$ must also be optimal in $G$.
\end{proof}

\subsection{Garblings Can Increase Expected Utility in Optimal Play}

Here we show how garblings can \textit{increase} expected utility in optimal play when they are not coordinated.  This justifies our use of coordinated garblings in our notion of being more informative (\Cref{def:informative}).  The following example is similar to Example 3.6 of \citet{lehrer2010signaling}, adapted to show that their result holds in even the restricted setting of PO-OSGs.

\begin{example}\label{ex:neg-voi}
    Let $\states = [2]\times[2]$ and $\statedist = \Unif(\states)$.  Let $\offutil\equiv 0$ and $\actutil((s_1,s_2)) = 2 - 3\I(s_1=s_2)$, so $\hum$ and $\rob$ try to act only when the state coordinates are distinct.  Consider the following two observation structures for $\states$ and the resulting PO-OSGs.

    \noindent\textbf{Structure 1.} $\hum$ and $\rob$ each observe one coordinate of $\states$.  Formally, $\hobs_1=\robs_1=[2]$ and with $S = (S_1,S_2)$, we have $\hobv = S_1$ and $\robv = S_2$.  By examination, we see that an optimal policy pair is
    $$
        \polh(\obh) = \begin{cases}
                \on & \text{if } o^\hum = 1, \\
                \off & \text{if } o^\hum = 2,
        \end{cases}
    $$
    and 
    $$
        \polr(o^\rob) = \begin{cases}
                a & \text{if } o^\rob = 1, \\
                w(a) & \text{if } o^\rob = 2.
        \end{cases}
    $$
    This policy pair achieves expected payoff of $\frac{3}{4}$.  There is one other optimal policy pair, given by swapping observations for which $\hum$ turns $\rob$ on/off and the observations for which $\rob$ acts/waits.

    \noindent\textbf{Structure 2.}. Now $\hum$ observes whether the coordinates of the state are distinct and $\rob$ observes nothing.  That is, $\hobs_2 = \{0,1\}$ and $\robs_2 = [1]$ and with $S=(S_1,S_2)$, we have $\hobv = \I(S_1\neq S_2)$ and $\robv = 1$.  Again by examination, the unique optimal policy pair is
    $$
        \polh(\obh) = \begin{cases}
                \on & \text{if } o^\hum = 1, \\
                \off & \text{if } o^\hum = 0,
        \end{cases}
    $$
    and $\polr\equiv \wait$.  As this pair only acts when the coordinates are distinct, the expected payoff is 1.

    Thus, structure 2 is better in optimal play than structure 1.  We now show that there is a garbling from structure 1 to 2 but not vice versa.  The garbling from structure 1 to 2 is $\nu:\hobs_1\times\robs_1\to\Delta(\hobs_2\times\robs_2)$ given by
    $
        \nu(\cdot\mid  \obh,\obr) = \delta_{(\I(\obh\neq\obr), 1)}.
    $
    However, there is no garbling from structure 2 to structure 1.  For let $\xi:\hobs_2\times\robs_2\to\Delta(\hobs_1\times\robs_1)$ be a stochastic map.  If $\xi$ were a garbling from structure 2 to structure 1, then we'd have $\xi(\cdot\mid \obh,\obr) = \delta_{(1, 1)}$ when $s=(1,1)$ and $\xi(\cdot\mid \obh,\obr) = \delta_{(2, 2)}$ when $s=(2,2)$.  This is impossible, because in both these cases $\hobv = 0$ and $\robv=1$ under structure 2.

    How is this example possible?  In short, the garbling $\nu$ is not coordinated.  We can see this by how it combines the information from $\hobv$ and $\robv$ in a highly dependent manner.  In this way, $\nu$ is in a sense \textit{informing} $\hum$ even as it garbles her observations: she receives the action-relevant information of whether $\robv=\hobv$.  Under independent garblings, such a scenario can never occur: Because each player's observations are garbled independently of the other's, they cannot gain information about what the other player sees.  A similar intuition holds for coordinated garblings.
\end{example}

\subsection{Proof of \Cref{prop:hum-informed}}

\HumInformedProp*

\begin{proof}
    
The following example demonstrates this.

\setcounter{table}{1}
\HumInformedExample*

We formalize this by defining a PO-OSG as follows:

\begin{itemize}

\item $\states = \{1.0, 1.1, 2.0\} \times \{A, B\}$: representing (version number, code type) pairs.

\item $\statedist = \Unif(\states)$: each (version number, code type) pair is equally likely, and the version number and code type are independent.

\item The payoff when acting, $\actutil$, depends on the state based on the following table:

\begin{table}[ht]
        \centering
        \begin{tabular}{c|c|c}
            \backslashbox{$\hum$}{$\rob$} & $A$ & $B$ \\ \hline
            $1.0$ & $+1$ & $-5$ \\ \hline
            $1.1$ & $-2$ & $+3$ \\ \hline 
            $2.0$ & $+3$ & $+3$
        \end{tabular}
    \end{table}
\end{itemize}

We reproduce the figure showing the optimal policies in \Cref{fig:app-hum-informed}. 

\begin{figure}[ht]
    \begin{center}
        \begin{minipage}{0.21\textwidth}
                \centering
                \resizebox{1\textwidth}{!}{\begin{tikzpicture}

\matrix (m) [matrix of nodes,
             nodes={minimum size=1.1cm, draw, anchor=center},
             row sep=-\pgflinewidth,
             column sep=-\pgflinewidth]
{
        \ & $A$ & $B$ \\
        $1.0$ & $+1$ & $-5$ \\ 
        $1.1$ & $-2$ & $+3$ \\ 
        $2.0$ & $+3$ & $+3$ \\
};

\draw (m-1-1.north west) -- (m-1-1.south east);

\node at (m-1-1.north east)  [xshift=-0.3cm, yshift=-0.3cm] {$\rob$};
\node at (m-1-1.south west) [xshift=0.3cm, yshift=0.3cm] {$\hum$};

\node[anchor=east, text=red,xshift=-3pt] (offlabel) at ($(m-2-1.west)!0.5!(m-3-1.west)$) {$\off$};
\node[left=0.1cm of m-4-1, text=gocolor] {$\on$};

\node[above=0.1cm of m-1-2, text=waitcolor] {$\wait$};
\node[above=0.1cm of m-1-3, text=waitcolor] {$\wait$};

\draw[decorate,decoration={brace,mirror,amplitude=5pt},red,thick] 
    (m-2-1.north west) -- (m-3-1.south west);
    
\draw[waitcolor, thick, rounded corners=7pt] 
    ($(m-4-2.north west) - (-0.1,0.1)$) 
    rectangle 
    ($(m-4-3.south east) - (0.1,-0.1)$);

\end{tikzpicture}}
                \captionof{subfigure}{Expected payoff $=1$}
            \end{minipage}
            \quad
            \begin{minipage}{0.21\textwidth}
                \centering
                \resizebox{1\textwidth}{!}{\begin{tikzpicture}

\matrix (m) [matrix of nodes,
             nodes={minimum size=1.1cm, draw, anchor=center},
             row sep=-\pgflinewidth,
             column sep=-\pgflinewidth]
{
        \ & $A$ & $B$ \\
        $1.0$ & $+1$ & $-5$ \\ 
        $1.1$ & $-2$ & $+3$ \\ 
        $2.0$ & $+3$ & $+3$ \\
};

\draw (m-1-1.north west) -- (m-1-1.south east);

\node at (m-1-1.north east)  [xshift=-0.3cm, yshift=-0.3cm] {$\rob$};
\node at (m-1-1.south west) [xshift=0.3cm, yshift=0.3cm] {$\hum$};

\node[left=0.1cm of m-2-1, text=red] {$\off$};
\node[left=0.1cm of m-3-1, text=gocolor] {$\on$};
\node[left=0.1cm of m-4-1, text=gocolor] {$\on$};

\node[above=0.19cm of m-1-2, text=gocolor] {$\go$};
\node[above=0.1cm of m-1-3, text=waitcolor] {$\wait$};

\draw[waitcolor, thick, rounded corners=7pt] 
    ($(m-3-3.north west) - (-0.1,0.1)$) 
    rectangle 
    ($(m-4-3.south east) - (0.1,-0.1)$);

\draw[gocolor, thick, rounded corners=7pt] 
    ($(m-2-2.north west) - (-0.1,0.1)$) 
    rectangle 
    ($(m-4-2.south east) - (0.1,-0.1)$);

\end{tikzpicture}}
                \captionof{subfigure}{Expected payoff $=\frac{4}{3}$}
        \end{minipage}
    \end{center}
    \caption{The optimal policy pairs in \Cref{ex:hum-informed} when $\hum$ is less informed (left) and when $\hum$ is more informed (right). In OPPs, $\hum$ becoming more informed makes $\rob$ wait strictly less often.}
    \label{fig:app-hum-informed}
\end{figure}

\textbf{Case 1.} Suppose $\hum$ observes only the first digit of the version number i.e. either $1.x$ or $2.x$. Formally, the observation structure in this case is as follows:
\begin{itemize}
    \item $\hobs = \{1.x, 2.x\}$
    \item $\robs = \{A, B\}$
    \item $\obsdist = \hobsdist \otimes \robsdist$, where:
    
    $\hobsdist(\cdot\mid s) = 
        \begin{cases}
            \delta_{1.x} &\text{if } s_1 \in \{1.0, 1.1\}, \\
            \delta_{2.x} &\text{if } s_1 = 2.0
        \end{cases}$
    
    $\robsdist(\cdot\mid s) = \delta_{s_2}$
\end{itemize}

We find the optimal policy pair for this game. We start by focusing on $\hum$'s policy.

Suppose $\hum$ observes $2.x$, so the version number is $2.0$. Then it is strictly dominant to act. So there is an optimal policy where $\hum$ always acts in this case.

Suppose $\hum$ observes $1.x$, so the version number is either $1.0$ or $1.1$. As the version number and code type are independent, the fact that we are conditioning on $\rob$'s policy having played the wait action does not change the fact that the version number is equally likely to be $1.0$ and $1.1$. Hence the expected payoff of acting (running the code) upon receiving this observation is $-1/2$, independent of $\rob$'s policy. Hence $\hum$ should play $\off$ (i.e. not execute the code) when observing $1$.

Now, knowing the optimal policy for $\hum$, it can be directly checked that for either of $\rob$'s observations, it is optimal for $\rob$ to wait (over unilaterally acting or terminating). 

To summarize, an optimal policy pair in this case is:

$$\polh(\obh) = 
    \begin{cases}
        \on &\text{if } \obh =  1.x, \\
        \off &\text{if } \obh = 2.x
    \end{cases}
$$

$$\polr(\obr) = \wait$$

This gives an expected payoff of $2/3$. It can be checked that this is the unique optimal policy pair, although we omit this analysis.

\textbf{Case 2.} Now suppose $\hum$ observes the full version number. 

In this case, the observation structure, ${\obsstruct}'$, is as follows:
\begin{itemize}
    \item ${\hobs}' = \{1.0, 1.1, 2.0\}$
    \item ${\robs}' = \{A, B\}$
    \item ${\obsdist}' = {\hobsdist}' \otimes {\robsdist}'$, where:
    
    ${\hobsdist}'(\cdot\mid s) = \delta_{s_1}$
    
    ${\robsdist}'(\cdot\mid s) = \delta_{s_2}$
\end{itemize}

First, observe that this observation model $\obsdist'$ \textit{is more informative for} $\hum$ \textit{than} $\obsdist$, in the sense of \Cref{def:informative}. Intuitively, this is because $\hum$ can recover the first digit of the version number from the full version number.
Formally, it is because there exists an independent garbling $\nu: {\hobs}' \times{\robs}' \to \Delta(\hobs\times\robs)$ translating from ${\obsdist}'$ to $\obsdist$ that decomposes into $\nu(\cdot\mid\obr, \obh) = \nu^\rob(\cdot\mid\obr) \nu^\hum(\cdot\mid\obh)$, with $\nu^\rob(\cdot\mid\obr) = \delta_{\obr}$ and

$$
    \nu^{\hum}(\cdot\mid\obh) = 
    \begin{cases}
        \delta_{1.x} &\text{if } \obh \in \{1.0, 1.1\}, \\
        \delta_{2.x} &\text{if } \obh = 2.0
    \end{cases}
$$.

Now, we attempt to find a deterministic optimal policy pair for this game, which we know always exists by \Cref{lm:dopps}.

We again starting by focusing on $\hum$'s policy. As before, $\hum$ should always act if it observes $2.0$. Now, there are only four ways to choose a deterministic human policy from this point---we can pick either $\on$ or $\off$ for each of the observations $1.0$ and $1.1$.

\begin{itemize}
    \item Suppose $\hum$ always plays $\on$ in response to both $1.0$ and $1.1$. Then the best response is for $\rob$ to wait in response to both $A$ and $B$, which achieves an expected payoff of $1/2$.

    \item Suppose $\hum$ instead plays $\on$ in response to $1.0$, and plays $\off$ in response to $1.1$. Then the best response for $\rob$ is to wait in response to $A$ and unilaterally act in response to $B$, which achieves an expected payoff of $5/6$. 

    \item Suppose $\hum$ plays $\on$ in response to $1.1$, and plays $\off$ in response to $1.0$. Then the best response for $\rob$ is to unilaterally act in response to $A$ and wait in response to $B$, which achieves a payoff of $4/3$.

    \item Finally, suppose instead $\hum$ switches off in response to both $1.0$ and $1.1$. Then it is best for $\rob$ to wait in response to both $A$ and $B$, achieving an expected payoff of $1$.
\end{itemize}

Hence the unique deterministic optimal policy pair (and hence unique OPP, by \Cref{cor:uniquedopp}) is:
$$\polh(\obh) = 
    \begin{cases}
        \on &\text{if } \obh \in \{1.1, 2.0\} \\
        \off &\text{if } \obh = 1.0
    \end{cases}
$$

$$\polr(\obr) = 
    \begin{cases}
        \go &\text{if } \obh = A \\
        \wait &\text{if } \obh = B
    \end{cases}
$$

Observe that in this case, $\rob$ only waited on observation $B$, but previously $\rob$ waited independent of their observation. Hence, our example shows it is possible for $\rob$ to wait less in optimal policy pairs when $\hum$ becomes more informed.

\end{proof}

\subsection{Proof of \Cref{prop:rob-informed}}
\label{app:rob-informed}
\RobInformedProp*
\begin{proof}
The following example demonstrates this.
\renewcommand{\qedsymbol}{}
\end{proof}

\begin{example}\label{ex:rob-informed}
    $\hum$ is either a novice programmer or an expert one, each with probability $1/2$, working on a codebase.  $\rob$ is $\hum$'s bug-fixing assistant and can see the number of bugs in $\hum$'s codebase: few, some, or many, with each number of bugs occurring with probability $1/3$ independent of $\hum$'s experience level.  $\rob$'s action $a$ is whether to try to fix all of $\hum$'s bugs, albeit sometimes accidentally introducing new bugs in the process.  We normalize $\offutil\equiv 0$ and $\actutil$ is given by the following payoffs

    \begin{table}[ht]
        \centering
        \begin{tabular}{c|c|c|c}
          \backslashbox{$\hum$}{$\rob$} & $F$ & $S$  & $M$ \\ \hline
        $N$ & $+2$ & $+3$ & $+4$ \\ \hline
        $E$ & $-4$ & $-1$ & $+2$
        \end{tabular}
    \end{table}
    \noindent
    where $F, S, M$ denote few, some, and many bugs, respectively, and $N, E$ denote novice and expert programmer. Consider the following two observation structures: 
    \begin{enumerate}
        \item $\hum$ observes her skill level but $\rob$ only sees if there are few or more than a few bugs.  That is, $\rob$ cannot distinguish between there being some or many bugs.  As we argue below, in the unique optimal policy pair, $\rob$ defers to $\hum$ only when there are few bugs.
        \item Now $\rob$ gets an upgrade and can distinguish whether there are few, some, or many bugs.  We show below that now in optimal policy pairs $\rob$ defers to $\hum$ unless there are many bugs.
    \end{enumerate}
    Claim: The observation structure in scenario 2 is strictly more informative for $\rob$, yet $\rob$ defers to $\hum$ more in optimal play.

    \begin{figure}[ht]
        \begin{center}
            \begin{minipage}{0.21\textwidth}
                \centering
                \resizebox{1\textwidth}{!}{\begin{tikzpicture}

\matrix (m) [matrix of nodes,
             nodes={minimum size=1.1cm, draw, anchor=center},
             row sep=-\pgflinewidth,
             column sep=-\pgflinewidth]
{
        \ & $F$ & $S$ & $M$ \\
        $N$ & $+2$ & $+3$ & $+4$\\ 
        $E$ & $-4$ & $-1$ & $+2$\\ 
};

\draw (m-1-1.north west) -- (m-1-1.south east);

\node at (m-1-1.north east)  [xshift=-0.3cm, yshift=-0.3cm] {$\rob$};
\node at (m-1-1.south west) [xshift=0.3cm, yshift=0.3cm] {$\hum$};

\node[left=0.1cm of m-2-1, text=gocolor] {$\on$};
\node[left=0.1cm of m-3-1, text=red] {$\off$};

\node[above=0.1cm of m-1-2, text=waitcolor] {$\wait$};
\node[above=0.7cm of m-1-3.5,text=gocolor] {$a$};

\draw[decorate,decoration={brace,amplitude=5pt},gocolor,thick] 
   (m-1-3.north west) -- (m-1-4.north east);
    
\draw[gocolor, thick, rounded corners=7pt] 
    ($(m-2-3.north west) - (-0.1,0.1)$) 
    rectangle 
    ($(m-3-4.south east) - (0.1,-0.1)$);

\draw[waitcolor, thick] (m-2-2) circle [radius=0.45cm];

\end{tikzpicture}}
                \captionof{subfigure}{Expected payoff = $\frac{5}{3}$}
            \end{minipage}
            \quad
            \begin{minipage}{0.21\textwidth}
                \centering
                \resizebox{1\textwidth}{!}{\begin{tikzpicture}

\matrix (m) [matrix of nodes,
             nodes={minimum size=1.1cm, draw, anchor=center},
             row sep=-\pgflinewidth,
             column sep=-\pgflinewidth]
{
        \ & $F$ & $S$ & $M$ \\
        $N$ & $+2$ & $+3$ & $+4$\\ 
        $E$ & $-4$ & $-1$ & $+2$\\ 
};

\draw (m-1-1.north west) -- (m-1-1.south east);

\node at (m-1-1.north east)  [xshift=-0.3cm, yshift=-0.3cm] {$\rob$};
\node at (m-1-1.south west) [xshift=0.3cm, yshift=0.3cm] {$\hum$};

\node[left=0.1cm of m-2-1, text=gocolor] {$\on$};
\node[left=0.1cm of m-3-1, text=red] {$\off$};

\node[above=0.1cm of m-1-2, text=waitcolor] {$\wait$};
\node[above=0.1cm of m-1-3,text=waitcolor] {$\wait$};
\node[above=0.2cm of m-1-4,text=gocolor] {$a$};

\draw[gocolor, thick, rounded corners=7pt] 
    ($(m-2-4.north west) - (-0.1,0.1)$) 
    rectangle 
    ($(m-3-4.south east) - (0.1,-0.1)$);

\draw[waitcolor, thick, rounded corners=7pt] 
    ($(m-2-2.north west) - (-0.1,0.1)$) 
    rectangle 
    ($(m-2-3.south east) - (0.1,-0.1)$);

\end{tikzpicture}}
                \captionof{subfigure}{Expected payoff = $\frac{11}{6}$}
            \end{minipage}
        \end{center}
        \caption{Optimal policy pairs for \Cref{ex:rob-informed} in scenario 1, when $\rob$ is less informed (left), and in scenario 2, when $\rob$ is more informed (right). Despite being less informed in scenario 1, $\rob$ waits less in optimal play.}
        \label{fig:rob-informed}
    \end{figure}
    First, let us show formally that the observation structure in scenario 2 is strictly more informative for $\rob$. $\states=\{N,E\} \times \{F,S,M\}$, where for instance the state $(N,F)$ means the human is a novice programmer and there are few bugs. In scenario 1, $\hobs_1=\{N,E\}$, $\robs_1=\{F,SM\}$ (with ``some'' and ``many'' bugs merged into the single observation $SM$), and the observation distribution $\obsdist_1$ accurately provides the agents with the relevant information about the state. For example, $\obsdist_!((\hobv=N,\robv=SM)\mid S=(N,M))=1$. In scenario 2, $\robs_2=\{F,S,M\}$, and the observation distribution reflects the increased sensitivity of $\rob$'s observations: this time, $\obsdist_2((\hobv=N,\robv=M)\mid S=(N,M))=1$. The following $\nu^\rob: \robs_2 \to \Delta(\robs)$ is a garbling of $\rob$'s observations in scenario 2 that generates $\rob$'s observations in scenario 1: $\nu^\rob(F|F)=1, \nu^\rob(SM|S)=1, \nu^\rob(SM|M)=1$. Further, there is no garbling $\nu^\rob_2: \robs \to \Delta(\robs_2)$ that reverses this. Observing $SM$ in scenario 1 could mean being in state $(N,S)$, which generates observation $S$ with probability 1 in scenario 2, which would require $\nu^\rob_2(S\mid SM)=1$. However, observing $SM$ in scenario 1 could also mean being in state $(N,M)$, which generates observation $M$ with probability 1 in scenario 2, which would require $\nu^\rob_2(M\mid SM)=1$. These are incompatible, so there is no such garbling. Therefore, the observation structure in scenario 2 is strictly more informative for $\rob$.

    Now, let us show that $\rob$ defers to $\hum$ more in optimal play in scenario 2. \Cref{fig:rob-informed} above depicts the optimal policy pairs (OPPs) in each scenario. The policy pair on the right is clearly optimal because it is perfect: the action goes through in all positive utility states and does not go through in any negative utility state. The policy pair on the left is not perfect, and clearly attains lower expected utility. How do we know this is a unique OPP in scenario 1? Since the only imperfect aspect of this policy pair is that the action goes through in state $(E,S)$, we can exhaustively search over possible actions for $\rob$ when seeing $SM$, and see that it is never possible to get all three positive utilities with no negatives. If $\polr(SM)=\off$, clearly the positive utilities are not attained, which drastically reduces expected payoff. If $\polr(SM)=\wait$, there is no policy for $\hum$ such that the action goes through in state $(E,M)$ but not $(E,S)$. Therefore, no policy pair can be perfect in scenario 1, and the depicted policy pair is optimal (being only 1 utility away from perfection). Note that $\rob$ waits when seeing $F$ or $S$ in scenario 2, which is a strict superset of waiting on just $F$ in scenario 1. Thus, $\rob$ can become less informed and wait less (going from scenario 2 to scenario 1).  
\end{example}

\section{Proofs and example formalizations for \Cref{sec:communication}}
\label{app:communication}
\subsection{Proof of \Cref{prop:rob-messages}}
\label{app:rob-messages-proof}

\RobMessages*
\begin{proof}
    To show this, we give a family of PO-OSGs, for any $0 < p < 0.5$, where $\rob$ always defers when $\abs{\rmessages} = \abs{\robs} - 2$, defers with probability $2p$ when $\abs{\rmessages} = \abs{\robs} - 1$, and always defers again when $\abs{\rmessages} = \abs{\robs}$.

    \begin{itemize}
    \item $\states = \{A_1, A_2, A_3\} \times \{B_1, B_2, B_3, B_4\}$.

    \item It is equally likely for the second component of the state to consist of $B_1, B_2, B_3, B_4$. The probability of $A_1$ is $p$, $A_2$ is $p$, and $A_3$ is $1 - 2p$.

    \item $\hobs = \{B_1, B_2, B_3, B_4\}$.
    \item $\robs = \{A_1, A_2, A_3\}$.

    \item The payoff when not acting is $\offutil\equiv 0$.  The payoff when acting, $\actutil$, is shown the following table:
    \begin{table}[ht]
        \centering
        \begin{tabular}{c | c c c}
         \backslashbox{$\hum$}{$\rob$} & $A_1$ & $A_2$ & $A_3$ \\ \hline
            $B_1$ & $\nicefrac{5}{p}$ & $\nicefrac{-10}{p}$ & $\nicefrac{-1}{(1 - 2p)}$ \\
            $B_2$ & $\nicefrac{-10}{p}$ & $\nicefrac{5}{p}$ & $\nicefrac{-1}{(1 - 2p)}$ \\
            $B_3$ & $\nicefrac{-10}{p}$ & $\nicefrac{-10}{p}$ & $\nicefrac{1}{(1 - 2p)}$ \\
            $B_4$ & $\nicefrac{10}{p}$ & $\nicefrac{10}{p}$ & $\nicefrac{10}{(1 - 2p)}$ \\ 
        \end{tabular}
    \end{table}
    \end{itemize}

    When $\abs{\rmessages} = \abs{\robs}$, an optimal policy for $\rob$ is to simply communicate its observations to $\hum$, and defer always, necessarily resulting in the maximum payoff (\Cref{cor:full-comm}).
    \newline

    When $\abs{\rmessages} = \abs{\robs} - 2 = 1$, no communication can occur.
    
    Note that it is strictly better to play $a$ than $\off$ in $A_3$, and strictly better to play $\off$ than $a$ in $A_1$ or $A_2$.
    
    So, $\rob$'s optimal policy will defer in some observations and turn off in others. We can go through all possibilities and find the expected payoff:
    \begin{itemize}
        \item \textbf{Deferring in $\{A_1, A_2, A_3\}$:} For $\hum$, the average payoff of playing $\on$ in any observation that isn't $B_4$ is always negative. So $\hum$ simply plays $\off$ in $B_1, B_2, B_3$ and $\on$ in $B_4$. This nets an average payoff of $\nicefrac{30}{4}$.
        \item \textbf{Deferring in $\{A_3\}$:} The optimal $\hum$ policy is to play $\on$ in $B_3$ and $B_4$ only, resulting in an average payoff of $\nicefrac{11}{4}$.
        \item \textbf{Deferring in $\{A_1\}$:} The optimal $\hum$ policy is to play $\on$ in $B_1$ and $B_4$, resulting in an average payoff of $\nicefrac{15}{4}$.
        \item \textbf{Deferring in $\{A_2\}$:} This is symmetrical with the example above, so also results in an average payoff of $\nicefrac{15}{4}$.
        \item \textbf{Deferring in $\{A_1, A_2\}$:} $\rob$ then plays $a$ in $A_3$. The average utility of playing $\on$ in any observation that isn't $B_4$ is negative. So the optimal $\hum$ policy is to play $\on$ in $B_4$ only, resulting in an average payoff of $\nicefrac{29}{4}$.
        \item \textbf{Deferring in $\{A_1, A_3\}$:} The optimal $\hum$ policy is to play $\on$ in $B_1$ and $B_4$ only, resulting in an average payoff of $\nicefrac{24}{4}$.
        \item \textbf{Deferring in $\{A_2, A_3\}$:} This is symmetrical with the example above, so also results in an average payoff of $\nicefrac{24}{4}$.
    \end{itemize}
    By exhaustion, the best policy is for $\rob$ to play $\wait$ always when it can send $\abs{\rmessages} = \abs{\robs} - 2$ messages.
    \newline

    When $\abs{\rmessages} = \abs{\robs} - 1 = 2$, we will prove that deferring in $A_1$ and $A_2$, communicating which is which to $\hum$, and playing $a$ in $A_3$ is the optimal policy for $\rob$.

    We will go through all possible policies for $\rob$, where $m_1$ is the action of sending message 1 and playing $\wait$ and $m_2$ is the action of sending message 2 and playing $\wait$.

    \begin{itemize}
        \item \textbf{Playing $m_1$ in $A_1$, $m_2$ in $A_2$, $a$ in $A_3$:} The optimal $\hum$ policy is to play $\on$ when receiving $m_1$ for observations $B_1, B_4$, $\on$ when receiving $m_2$ for observations $B_2, B_4$. This results in a total average payoff of $\nicefrac{39}{4}$.
        \item \textbf{Playing $m_1$ in $A_1$, $\off$ in $A_2$, $m_2$ in $A_3$:} The optimal $\hum$ policy is to play $\on$ when receiving $m_1$ for observations $B_1, B_4$, $\on$ when receiving $m_2$ for observations $B_3, B_4$. This results in a total average payoff of $\nicefrac{24}{4}$.
        \item \textbf{Playing $\off$ in $A_1$, $m_1$ in $A_2$, $m_2$ in $A_3$:} This is symmetrical with the example above, so also results in an average payoff of $\nicefrac{24}{4}$.
    \end{itemize}
    Swapping messages $m_1$ and $m_2$ results in a symmetrical game with the same utility. The maximum payoff we get by never sending any messages, by the analysis above, is $\nicefrac{30}{4}$.

    So, $\rob$ defers in a subset of the observations ($\{A_1,A_2\} \subset \{A_1,A_2,A_3\}$) with only $2p$ probability when $\abs{\rmessages} = \abs{\robs} - 1$ as claimed!
\end{proof}

\subsection{Proof of \Cref{prop:hum-messages}}
\label{app:hum-messages-detailed}
\HumMessages*

\begin{proof}

We give a concrete example. Consider the following POAG-C:

\begin{itemize}

\item $\states = \{1, 2, 3, 4\} \times \{X, A, B, C, D\}$: $\hum$ will observe the first entry, $\rob$ will observe the second entry.

\item $\statedist = \Unif(\states)$: each state is equally likely. Note this means the first and second entries of the state are independent. 

\item $\hobs = \{1, 2, 3, 4\}$.

\item $\robs = \{X, A, B, C, D\}$.

\item ${\obsdist}' = {\hobsdist}' \otimes {\robsdist}'$, where ${\hobsdist}'(\cdot\mid s) = \delta_{s_1}$ and ${\robsdist}'(\cdot\mid s) = \delta_{s_2}$.

\item The payoff when not acting is $\offutil = 0$. The payoff when acting, $\actutil$, is shown in the following table:

\begin{table}[ht]
    \centering
    \begin{tabular}{c|c|c|c|c|c}
      \backslashbox{$\hum$}{$\rob$} & $X$ & $A$ & $B$ & $C$ & $D$ \\ \hline
    $1$ & $+10$ & $+1$ & $+1$ & $-30$ & $-30$ \\ \hline
    $2$ & $-30$ & $+1$ & $-30$ & $-30$ & $-30$ \\ \hline 
    $3$ & $+10$ & $-30$ & $-30$ & $+1$ & $+1$ \\ \hline 
    $4$ & $-30$ & $-30$ & $-30$ & $+1$ & $-30$
    \end{tabular}
\end{table}

\item We will start by considering no communication: $\messages = (\hmessages, \rmessages)$ with $\hmessages$, $\rmessages$ both singleton sets. Later, we will consider expanding $\hmessages$ to a set of size 2, ${\hmessages}' = \{M_0, M_1\}$.

\end{itemize}

\textbf{Case 1.} We will start by identifying deterministic OPPs in the case where $\hmessages$, $\rmessages$ are both singletons. This is equivalent to the case of no communication. Firstly, we show there is a unique deterministic policy pair with the property that the action is taken whenever $u_a = +10$, and the action is not taken whenever $u_a = -30$. Suppose $(\polh, \polr)$ is a deterministic policy pair with this property. Then:
\begin{enumerate}
    \item $\polr$ cannot play $\go$ or $\off$ when observing $X$, as the column labeled $X$ has both $+10$ and $-30$ entries. Hence $\polr$ must play $\wait$ on observing $X$.
    \item Hence we must have 
    $$\polh(\obh) = 
    \begin{cases}
        \on &\text{if } \obh \in \{1, 3\} \\
        \off &\text{if } \obh \in \{2, 4\}
    \end{cases}
    $$
    so that the action is taken in states $(X, 1), (X, 3)$ and not taken in $(X, 2), (X, 4)$.
    \item Hence, $\polr$ must play $\off$ when observing anything in $\{A, B, C, D\}$ to avoid sometimes acting when $u_a = -30$.
\end{enumerate}

Hence the unique policy pair with the property described is:

$$\polh(\obh) = 
    \begin{cases}
        \on &\text{if } \obh \in \{1, 3\} \\
        \off &\text{if } \obh \in \{2, 4\}
    \end{cases}
$$

$$\polr(\obr) = 
    \begin{cases}
        \wait &\text{if } \obr = X \\
        \off &\text{if } \obr \in \{A, B, C, D\}
    \end{cases}
$$.

This has an expected utility of $+1$. But observe that any deterministic policy pair without this property cannot achieve more than $+4/5$ utility, as:

\begin{itemize}
    \item if a policy pair takes the action on a state where $\actutil = -30$, this dominates all positive payoff it can achieve (the positive numbers in the table only sum to $+26$), and;
    \item if the policy pair fails to take the action on one of the states where $\actutil =+10$, the remaining positive numbers in the table sum to at most $+16$, so the expected payoff is at most $+16/20 = +4/5$.
\end{itemize}

So the policy pair described is the unique deterministic OPP (and hence unique OPP by \Cref{cor:uniquedopp}).

\textbf{Case 2.} Now, we seek deterministic OPPs in the case where $\hum$ can communicate one bit to $\rob$. Formally, $\rmessages$ is still a singleton, but $\hmessages = \{M_1, M_2\}$. (As $\rmessages$ is a singleton, we omit it in the descriptions of the policies below.)

We start by describing an optimal policy pair. The policy for $\hum$ is as follows.
$$\polh(\obh) = 
    \begin{cases}
        M_0,~ \on &\text{if } \obh = 1\\
        M_0,~ \off &\text{if } \obh = 2 \\
        M_1,~ \on &\text{if } \obh = 3\\
        M_1,~ \off &\text{if } \obh = 4
    \end{cases}
$$
Note that this sends $M_0$ when its observation is $1$ or $2$, and $M_1$ when its observation is $3$ or $4$.

The policy for $\rob$, which determines $\raxn$ from $\hum$'s message $\hmessage$ (given by the row) and $\obr$ (given by the column) is shown in the following table:

\begin{table}[ht]
    \centering
    \begin{tabular}{c|c|c|c|c|c}
      \backslashbox{$\hum$}{$\rob$} & $X$ & $A$ & $B$ & $C$ & $D$ \\ \hline
        $M_0$ & $\wait$ & $\go$ & $\wait$ & $\off$ & $\off$ \\ \hline
        $M_1$ & $\wait$ & $\off$ & $\off$ & $\go$ & $\wait$
    \end{tabular}
\end{table}

This policy pair produces the following behavior depending on the state, where we use $\go$ to denote when the action is taken, and $\off$ to denote when $\rob$ is switched off (either by $\hum$ or $\rob$):

\begin{table}[ht]
    \centering
    \begin{tabular}{c|c|c|c|c|c}
      \backslashbox{$\hum$}{$\rob$} & $X$ & $A$ & $B$ & $C$ & $D$ \\ \hline
    $1$ & $\go$ & $\go$ & $\go$ & $\off$ & $\off$ \\ \hline
    $2$ & $\off$ & $\go$ & $\off$ & $\off$ & $\off$ \\ \hline 
    $3$ & $\go$ & $\off$ & $\off$ & $\go$ & $\go$ \\ \hline 
    $4$ & $\off$ & $\off$ & $\off$ & $\go$ & $\off$
    \end{tabular}
\end{table}

This is an optimal policy pair, as it is \textit{perfect}---it plays the action whenever $\actutil > 0$, and avoids playing the action whenever $\actutil < 0$. 

We show this is the unique deterministic OPP, up to swapping $M_0$ and $M_1$. As we have shown there is one perfect OPP, any other OPP must also be perfect. In other words, it must also produce the behavior described in the above table. Let $(\polh, \polr)$ be a policy pair producing the above behavior.

Firstly, we show that $\hum$ cannot have a $\polh$ which communicates the same message when observing both $1$ and $4$. Suppose otherwise. Then, let us focus purely on the possible $\rob$ observations $A$ and $C$. We must have the following behavior:

\begin{table}[ht]
    \centering
    \begin{tabular}{c|c|c}
      \backslashbox{$\hum$}{$\rob$} & $A$ & $C$ \\ \hline
    $1$ & $\go$ & $\off$ \\ \hline
    $4$ & $\off$ & $\go$
    \end{tabular}
\end{table}

Then, as $\hum$ sends the same message on both $1$ and $4$, $\rob$ has no way of distinguishing between which $\hobs$ was observed out of $1$ and $4$. So $\rob$ must play $\wait$ when observing both $A$ and $C$. But then $\hum$ cannot generate the desired behavior, as $\hum$ cannot distinguish between the possible $\rob$ observations $A$ and $C$.

Hence $\hum$ must send different messages when observing $1$ and $4$ to achieve the behavior in the table. In the language of communication complexity, $\{(1, A), (4, C)\}$ is a fooling set. But in fact the same argument goes through for the observation pairs $(2, 3)$, $(1, 3)$, and $(2, 4)$. So $\hum$ must send one message when observing either $1$ or $2$ and the other when observing $3$ or $4$, which is precisely the optimal communication policy we gave (up to relabeling of messages).

Now that we have fixed $\hum$'s communication policy, we can perform a similar analysis to earlier, iterating through the possible $\hum$ policies, to arrive at the conclusion that the given deterministic OPP is unique up to relabeling.

To summarize, we have the following:
\begin{enumerate}
    \item In the setting where $\hum$ could communicate one bit, in the unique optimal policy (up to relabeling messages), $\rob$ waited when observing $X$ (and receiving any message), \textbf{or} when observing $B$ and receiving message $M_0$, \textbf{or} when observing $D$ and receiving message $M_1$.
    \item In the no-communication setting, in the unique optimal policy, $\rob$ waited \textbf{only} when observing $X$.
    \item Hence, decreasing $\hum$'s communication caused $\rob$ to wait less.
\end{enumerate}
    
\end{proof}

\section{Proofs for \Cref{sec:naivete}}
\label{app:naivete}

\subsection{Proof of \Cref{prop:voi-naive}}
\label{prop:voi-naive-detailed}
\VoiNaive*
\begin{proof}
    For (a), note that in $\rob$-unaware optimal policy pairs, $\hum$'s policy does not vary with $\rob$'s.  Because $\rob$ knows the structure of the game and that $\hum$ is $\rob$-unaware, it can deduce $\hum$'s policy and treat $\hum$'s policy and observations as simply another part of the environment.  In other words, the game has become a single-agent problem, which puts us back into the classic situation of \citet{blackwell1951comparison,blackwell1953equivalent}'s informativeness theorem in which more informative observation structures yield greater expected payoff.

    For (b), we construct a simple example.  Let $\states = [3]\times \{A, B\}$ and $\statedist = \Unif(\states)$.  Let $\offutil\equiv 0$ and $\actutil$ be given by the following table: 
   \begin{table}[ht]
        \centering
        \begin{tabular}{c|c|c}
          \backslashbox{$\hum$}{$\rob$} & $A$ & $B$ \\ \hline
        $1$ & $+1$ & $+1$ \\ \hline
        $2$ & $+2$ & $-3$ \\ \hline 
        $3$ & $-4$ & $-4$
        \end{tabular}
    \end{table}
 
    Consider the following two observation structures and the resulting PO-OSGs:
    \begin{enumerate}
        \item Each player observes one coordinate.  That is, $\hobs=[3]$ and $\robs=\{A, B\}$ and when $S = (S_1,S_2)$ we have $\hobv = S_1$ and $\robv=S_2$.  We have
        $$
            \E[\actutil(S)\mid\hobv=\obh] = \begin{cases}
                1 & \text{if } \obh = 1, \\
                -\frac{1}{2} & \text{if } \obh = 2, \\
                -4 & \text{if } \obh = 3.
            \end{cases}
        $$
        Hence
        $$
            \polh(\obh) = \begin{cases}
                \on & \text{if } \obh = 1, \\ 
                \off & \text{otherwise}.
            \end{cases}
        $$
        $\rob$'s best response is then $\polr\equiv\wait$.  The expected payoff for $(\polh,\polr)$ is then $1/3$.

        \item $\rob$ has the same observations, but $\hum$ only sees whether $S_1=3$.  Now $\hobs=\{0, 1\}$ and $\hobv = \I(S_1 = 3)$.  Now 
        $$
            \E[\actutil(S)\mid\hobv=\obh] = \begin{cases}
                1/4 & \text{if } \obh = 0, \\
                -4 &\text{if } \obh=1.
            \end{cases}
        $$
        Thus
        $$
            \polh(\obh) = \begin{cases}
                \on & \text{if }\obh = 0, \\
                \off &\text{if }\obh=1.
            \end{cases}
        $$
        $\rob$'s best response is now
        $$
            \polr(\obr) = \begin{cases}
                \wait & \text{if } \obr = A, \\
                \off & \text{if } \obr = B.
            \end{cases}
        $$
        The expected payoff for $(\polh,\polr)$ is now $1/2$.
    \end{enumerate}
    Hence observation structure 2 is better in $\rob$-unaware optimal play than observation structure 1.  Yet structure 1 is strictly more informative for $\hum$ than structure 2.  Clearly structure 1 is weakly more informative for $\hum$ than structure 2.  There is no garbling the other way, as the observations from structure 2 cannot determine the observations in structure 1.
\end{proof}

\subsection{Proof of \Cref{prop:naive-not-better}}
\label{prop:naive-not-better-detailed}

\NaiveNotBetter*

\begin{proof}
    In fact, the previous examples we gave for \Cref{prop:rob-informed,prop:hum-informed} directly work, as $\hum$ already plays the $\rob$-unaware policy in optimal policy pairs.

    \begin{enumerate}[(a)]
        \item Recall the example given in \Cref{prop:rob-informed}. We show the optimal policy pairs in the figure below.
        
        \begin{figure}[ht]
            \begin{center}
                \begin{minipage}{0.21\textwidth}
                        \centering
                        \resizebox{1\textwidth}{!}{\begin{tikzpicture}

\matrix (m) [matrix of nodes,
             nodes={minimum size=1.1cm, draw, anchor=center},
             row sep=-\pgflinewidth,
             column sep=-\pgflinewidth]
{
        \ & $A$ & $B$ \\
        $1.0$ & $+1$ & $-5$ \\ 
        $1.1$ & $-2$ & $+3$ \\ 
        $2.0$ & $+3$ & $+3$ \\
};

\draw (m-1-1.north west) -- (m-1-1.south east);

\node at (m-1-1.north east)  [xshift=-0.3cm, yshift=-0.3cm] {$\rob$};
\node at (m-1-1.south west) [xshift=0.3cm, yshift=0.3cm] {$\hum$};

\node[anchor=east, text=red,xshift=-3pt] (offlabel) at ($(m-2-1.west)!0.5!(m-3-1.west)$) {$\off$};
\node[left=0.1cm of m-4-1, text=gocolor] {$\on$};

\node[above=0.1cm of m-1-2, text=waitcolor] {$\wait$};
\node[above=0.1cm of m-1-3, text=waitcolor] {$\wait$};

\draw[decorate,decoration={brace,mirror,amplitude=5pt},red,thick] 
    (m-2-1.north west) -- (m-3-1.south west);
    
\draw[waitcolor, thick, rounded corners=7pt] 
    ($(m-4-2.north west) - (-0.1,0.1)$) 
    rectangle 
    ($(m-4-3.south east) - (0.1,-0.1)$);

\end{tikzpicture}}
                        \captionof{subfigure}{Expected payoff $=1$}
                    \end{minipage}
                    \quad
                    \begin{minipage}{0.21\textwidth}
                        \centering
                        \resizebox{1\textwidth}{!}{\begin{tikzpicture}

\matrix (m) [matrix of nodes,
             nodes={minimum size=1.1cm, draw, anchor=center},
             row sep=-\pgflinewidth,
             column sep=-\pgflinewidth]
{
        \ & $A$ & $B$ \\
        $1.0$ & $+1$ & $-5$ \\ 
        $1.1$ & $-2$ & $+3$ \\ 
        $2.0$ & $+3$ & $+3$ \\
};

\draw (m-1-1.north west) -- (m-1-1.south east);

\node at (m-1-1.north east)  [xshift=-0.3cm, yshift=-0.3cm] {$\rob$};
\node at (m-1-1.south west) [xshift=0.3cm, yshift=0.3cm] {$\hum$};

\node[left=0.1cm of m-2-1, text=red] {$\off$};
\node[left=0.1cm of m-3-1, text=gocolor] {$\on$};
\node[left=0.1cm of m-4-1, text=gocolor] {$\on$};

\node[above=0.19cm of m-1-2, text=gocolor] {$\go$};
\node[above=0.1cm of m-1-3, text=waitcolor] {$\wait$};

\draw[waitcolor, thick, rounded corners=7pt] 
    ($(m-3-3.north west) - (-0.1,0.1)$) 
    rectangle 
    ($(m-4-3.south east) - (0.1,-0.1)$);

\draw[gocolor, thick, rounded corners=7pt] 
    ($(m-2-2.north west) - (-0.1,0.1)$) 
    rectangle 
    ($(m-4-2.south east) - (0.1,-0.1)$);

\end{tikzpicture}}
                        \captionof{subfigure}{Expected payoff $=\frac{4}{3}$}
                \end{minipage}
            \end{center}
            \caption{The optimal policy pairs in \Cref{ex:hum-informed} when $\hum$ is less informed (left) and when $\hum$ is more informed (right). In OPPs, $\hum$ becoming more informed makes $\rob$ wait strictly less often. These are also $\rob$-unaware OPPs.}
            \label{fig:hum-informed-matrices}
        \end{figure}
        
        In the less informative case, $\hum$'s policy in the optimal policy pair is:
        $$\polh(\obh) = 
            \begin{cases}
                \on &\text{if } \obh = 2.x \\
                \off &\text{if } \obh = 1.x
            \end{cases}
        $$

        This is also the $\rob$-unaware policy, as $$\E[\actutil(S)\mid \hobv=1.x] = -3/4 < 0$$ and $$
        \E[\actutil(S)\mid \hobv=2.x] = +3 > 0.
        $$

        In the more informative case, $\hum$'s policy in the optimal policy pair is:
        $$\polh(\obh) = 
            \begin{cases}
                \on &\text{if } \obh \in \{1.1, 2.0\} \\
                \off &\text{if } \obh = 1.0
            \end{cases}
        $$

        This is also the $\rob$-unaware policy, as we have the following three results: $$\E[\actutil(S)\mid \hobv=1.0] = -3 < 0, $$ $$\E[\actutil(S)\mid \hobv=1.1] = +1/2 > 0, $$and $$
        \E[\actutil(S)\mid \hobv=2.x] = +3 > 0.
        $$

        Hence the unique optimal policy pair is also the unique $\rob$-unaware optimal policy pair in both cases. 

        \item Recall that in both cases, $\hum$ observes the row in the following table which shows how $\actutil$ depends on the state:
        \begin{table}[ht]
            \centering
            \begin{tabular}{c|c|c|c}
              \backslashbox{$\hum$}{$\rob$} & $F$ & $S$ & $M$\\ \hline
            $N$ & $+2$ & $+3$ & $+4$ \\ \hline
            $E$ & $-4$ & $-1$ & $+2$ 
            \end{tabular}
        \end{table}

        Therefore, the $\rob$-unaware human policy is:

        $$\polh(\obh) = 
            \begin{cases}
                \on &\text{if } \obh = N \\
                \off &\text{if } \obh = E
            \end{cases}
        $$

        as $$\E[\actutil(S)\mid \hobv= N] = +3 > 0, $$ and $$
        \E[\actutil(S)\mid \hobv=E] = -1 < 0.
        $$

        This is identical to the human policy of the optimal policy pair of both cases of the example in \Cref{prop:rob-informed}. Hence the unique optimal policy pair is also the unique $\rob$-unaware optimal policy pair in both cases.
    \end{enumerate}
\end{proof}

\section{The Complexity of Solving PO-OSGs}
\label{app:complexity}

Computing optimal policy pairs in off-switch games without partial observability is easy. $\rob$ can simply compute the expected value of each action and play the highest one, $\hum$ can compute the expected value of ON and OFF then do the same.

With the introduction of partial observability, the landscape becomes much more interesting.  \citet{bernstein2002dec} showed that for decentralized POMDPs, of which PO-OSGs are instances of, deciding whether a policy pair exists with utility above a given threshold is NEXP-complete.
Given their specialized nature, finding optimal policy pairs in PO-OSGs is easier, but still computationally difficult.

\begin{theorem}
    \label{thm:posg-np-complete}
    The following decision problem is NP-Complete: given a PO-OSG and a natural number $k$, decide if there exists a policy pair $(\polh, \polr)$ with expected payoff at least $k$.
\end{theorem}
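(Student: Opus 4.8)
The plan is to prove membership in NP by a counting/certificate argument and then NP-hardness by a reduction from the (unweighted) MAX-CUT decision problem.

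For membership, I would use a deterministic policy pair as the certificate. By \Cref{cor:uniquedopp}, the maximum expected payoff over all (possibly stochastic) policy pairs is attained by a deterministic one, so an expected payoff of at least $k$ is achievable iff it is achievable by some deterministic pair. A deterministic $\polr:\robs\to\raxns$ and $\polh:\hobs\to\haxns$ have description size $O(|\robs|+|\hobs|)$, and the induced expected payoff $\sum_s \statedist(s)\sum_{(\obh,\obr)}\obsdist(\obh,\obr\mid s)\,\util(s,\polh(\obh),\polr(\obr))$ is computable in time polynomial in the size of the POSG. Hence "payoff $\ge k$" is verifiable in polynomial time, so the problem lies in NP.

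For hardness I would reduce from MAX-CUT: given a graph $\mathcal{G}=(V,E)$ and threshold $t$, build a POSG with $\offutil\equiv 0$ in which the human's binary choice encodes a cut and the robot's three-way choice implements a ``positive part'' operation. Concretely, set $\hobs=V$ and read $\polh(v)=\on$ as placing $v$ on one side of the cut $N\subseteq V$. For each edge $\{u,v\}\in E$ introduce a robot observation $c^{u\to v}$ together with two states, one with observations $(\obh,\obr)=(u,c^{u\to v})$ and $\actutil=+1$ and one with $(v,c^{u\to v})$ and $\actutil=-1$, and symmetrically a robot observation $c^{v\to u}$ with the roles of $u,v$ swapped. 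Let $\statedist$ be uniform over all $M=4|E|$ states, and multiply every payoff by $M$ so that expected payoffs are integers. The instance is clearly polynomial in $|V|+|E|$.

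The key computation is that, for a fixed cut $N$, the robot best-responds on each observation independently (distinct robot observations occur in disjoint state sets), and on $c^{u\to v}$ it compares $\go$ and $\off$ (each worth $+1-1=0$) against $\wait$ (worth $\I(u\in N)-\I(v\in N)$), giving optimal contribution $\max(0,\I(u\in N)-\I(v\in N))$. Adding the contributions of $c^{u\to v}$ and $c^{v\to u}$ yields $|\I(u\in N)-\I(v\in N)|=\I(u,v\text{ on opposite sides})$, so the maximal expected payoff equals exactly the size of a maximum cut; thus the POSG admits a policy pair of payoff at least $t$ iff $\mathcal{G}$ has a cut of size at least $t$. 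I expect the main obstacle to be designing the edge gadget so the robot cannot short-circuit the objective: because the robot can always play $\off$, any observation with purely negative payoffs contributes nothing, so a naive penalty observation for edges fails and penalties must instead be packaged together with rewards in a single observation of total payoff zero (the $+1/-1$ pattern), which is precisely what renders $\go$ and $\off$ worthless and forces $\wait$ to realize $\max(0,\cdot)$. Verifying that this recovers the cut value exactly, and that no policy pair gains by exploiting $\go$, is the one step requiring care; the remaining bookkeeping (integrality via the scaling factor $M$ and polynomiality of the instance) is routine.
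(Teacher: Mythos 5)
Your proposal is correct, and while it follows the same overall skeleton as the paper (membership via deterministic certificates justified by \Cref{cor:uniquedopp} plus polynomial-time evaluation of expected payoff, hardness via reduction from MAX-CUT with the human's $\on$/$\off$ choice encoding the cut), your hardness gadget is genuinely different. The paper takes the state space to be all of $V\times V$ with the human observing one coordinate and the robot the other, and places a payoff of $-n^4$ on the diagonal so that $\go$ is never worth playing; the robot then encodes the \emph{other} side of the cut by choosing between $\wait$ and $\off$ per vertex, and the crossing edges are exactly the states where the action goes through. Your construction instead gives the robot one observation per directed edge with an antisymmetric $\pm 1$ payoff pair, which makes $\go$ and $\off$ both worth zero and turns the robot's best response into the positive-part operation $\max(0,\I(u\in N)-\I(v\in N))$; summing the two orientations recovers the cut indicator exactly. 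Your version buys a cleaner identity (optimal payoff \emph{equals} the cut size, with small integer payoffs and only $4|E|$ states rather than $|V|^2$) and avoids the paper's separate argument that no policy pair gains by exploiting $\go$; the paper's version buys a more uniform state space and a robot policy that has a direct combinatorial reading as one side of the cut. Both are valid polynomial-time reductions, and your NP-membership argument matches the paper's.
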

\begin{proof}
    By \Cref{cor:uniquedopp}, we may consider only deterministic policy pairs.  That is, if there is a policy pair $(\polh,\polr)$ with expected payoff at least $k$, then there is also a deterministic optimal policy pair with expected payoff at least $k$.

    To show that our decision problem is in NP, note that given an optimal policy pair to determine if the optimal policy pair has expected payoff bigger than $k$, it suffices to compute a linear combination of payoffs: iterating through each pair of human-assistant observations, using the policy to find expected payoff in constant time, and scaling by the probability of those observations. This gives us a $\bO{\abs{\robs}\cdot \abs{\hobs}}$ time algorithm for verifying a solution.

    To show it is NP-hard, we provide a reduction from MAXCUT (which is known to be NP-complete). Consider the following problem: given a graph $G = (V, E)$ and value $k$, decide if there exists a cut of size at least $k$. Let $n = \abs{V}$. We can construct the following equivalent PO-OSG.
    The state space consists of pairs of vertices, $\states = V \times V$. The human can see the first vertex, $\hobs = V$, the assistant the second $\robs = V$. Each pair of vertices is equally likely.  Clearly this game can be constructed in polynomial time.

    The utility of acting in state $(v_1, v_2) \in \states$,
    \begin{equation*}
        \actutil((v_1, v_2)) = \begin{cases}
            -n^4 & \text{if $v_1 = v_2$}, \\
            n^2 & \text{if $(v_1, v_2) \in E$}, \\
            0 & \text{otherwise},
        \end{cases}
    \end{equation*}
    and $\offutil\equiv 0$. Hence the players try to act exactly when they receive adjacent vertices and never when they have the same vertex. This setup encourages them to choose a cut and only act when they see a vertex in their part.

    If a cut $(V^\rob, V^\hum)$ of size $k$ exists in $G$, then there exists a policy pair with expected payoff at least $k$. Indeed, $\rob$ can play $\wait$ when $v_1 \in V^\rob$, and $\off$ otherwise. $\hum$ responds by playing $\on$ when $v_2 \in V^\hum$ and $\off$ otherwise. Formally:
    \begin{align*}
         \polr(\obr) = \begin{cases}
            w(a) & \text{if } \obr\in V^\rob, \\
            \off & \text{if } \obr\in V^\hum, 
        \end{cases} \\
        \polh(\obh) = \begin{cases}
            \on & \text{if } \obh\in V^\hum, \\
            \off & \text{if } \obh\in V^\rob.
        \end{cases}
    \end{align*}

    When $\hum$ and $\rob$ coordinate on playing $a$, they must have the following expected utility:
    \begin{align*}
        &\frac{1}{n^2}\sum_{(\obh,\obr)\in V^\hum\times V^\rob} \actutil((\obh,\obr)) \\
        &= \sum_{(\obh,\obr)\in V^\hum\times V^\rob} \I((\obh,\obr)\in E) \geq k.
    \end{align*}

    In the other direction, suppose that $(\polr,\polh)$ is a deterministic policy pair achieving expected payoff at least $k$. We will show that there exists a cut of size $k$.

    First, notice that there is never an incentive for $\rob$ to play $a$. The expected utility, regardless of $\hum$'s observation, is always at most:
    \begin{align*}
        &\frac{1}{n^2}\sum_{(\obh,\obr)\in V^\hum\times V^\rob} \actutil((\obh,\obr)) \\
        &\leq \frac{1}{n^2}\left(\frac{n(n - 1)}{2} \cdot n^2 - n^4\right) < 0.
    \end{align*}
    The cost of both vertices being the same is simply too high for $\rob$ to risk playing $a$.  Moreover, for this reason, there is no $v\in V$ such that $\polr(v) = w(a)$ and $\polh(v) = \on$.

    This allows us to define the following disjoint sets of vertices:
    \begin{align*}
        &V^\hum = \{v\in V: \polh(v) = \on\}, \\
        &V^\rob = \{v\in V: \polr(v) = w(a)\}, \\
        &V^0 = V\setminus(V^\hum\cup V^\rob).
    \end{align*}

    Let $V_1 = V^\hum$ and $V_2 = V^\rob\cup V^0$.  Consider the cut $(V_1, V_2)$. The size of this cut must be:
    \begin{align*}
        &\sum_{(v_1, v_2)\in V_1\times V_2} \I((v_1, v_2)\in E) \\
        &\geq \sum_{(v_1, v_2)\in V^\hum\times V^\rob} \I((v_1, v_2)\in E) \\
        &= \frac{1}{n^2} \sum_{(v_1, v_2)\in V^\hum\times V^\rob} n^2\I((v_1, v_2)\in E).
    \end{align*}
    We can rewrite this to iterate through all pairs of vectors with the following indicator:
    \begin{align*}
        \frac{1}{n^2} \sum_{(v_1, v_2)\in V\times V} \I(\polh(v_1) = &\on \land \polr(v_2) = w(a))\\
        &\cdot  n^2\I((v_1, v_2)\in E).
    \end{align*}
    Because $\rob$ never plays $a$, this is the expression of the expected utility of $(\polr, \polh)$, and so is at least $k$. Thus, the max cut is of size at least $k$, proving that a policy of utility at least $k$ exists if and only if a cut of size $k$ exists, as claimed!
\end{proof}

By comparison, computing $\rob$-unaware optimal policy pairs (assuming constant-time lookups) is easy. Consider the following two-step algorithm:
\begin{enumerate}
    \item Compute $\polh$ in $\mathcal{O}(\text{poly}(|\states|,|\hobs|,|\robs|))$ time.  For $\obh\in\hobs$:
    \begin{enumerate}
        \item Set $\Delta = \E[\actutil(S)-\offutil(S)\mid\hobv=\obh]$, which we can calculate in $\mathcal{O}(\text{poly}(|\states|,|\robs|))$ via Bayes' rule and LOTP.

        \item Set $\polh(\obh)$ to $\on$ if $\Delta \geq 0$ and $\off$ otherwise.
    \end{enumerate}
    \item Compute $\polr$ in $\mathcal{O}(\text{poly}(|\states|,|\hobs|,|\robs|))$ time.  For $\obr\in\robs$:
    \begin{enumerate}
        \item Set
        \begin{align*}
            \Delta_a &= \E[\actutil(S)\I(\polh(\hobv)=\on)\mid\robv=\obr] \\
                     &\quad + \E[\offutil(S)\I(\polh(\hobv)=\off)\mid\robv=\obr] \\ 
                     &\quad - \E[\actutil(S)\mid\robv=\obr]
        \end{align*}
        and 
        \begin{align*}
            \Delta_\off &= \E[\actutil(S)\I(\polh(\hobv)=\on)\mid\robv=\obr] \\
                        &\quad + \E[\offutil(S)\I(\polh(\hobv)=\off)\mid\robv=\obr] \\
                        &\quad - \E[\offutil(S)\mid\robv=\obr].
        \end{align*}
        We can calculate these in $\mathcal{O}(\text{poly}(|\states|,|\hobs|,|\robs|))$ time as before.

        \item Now set
        $$
            \polr(\obr) = \begin{cases}
                a &\text{if } \Delta_a < 0, \\
                \off & \text{if } \Delta_\off < 0, \\
                \wait & \text{otherwise}.
            \end{cases}
        $$
    \end{enumerate}
\end{enumerate}
This algorithm calculates the $\rob$-unaware optimal policy pair in $\mathcal{O}(\text{poly}(|\states|,|\hobs|,|\robs|))$ time, as claimed.

The results of this section vindicate our choice to study $\rob$-unaware optimal policy pairs. $\rob$-unaware optimal policy pairs are significantly easier to calculate in general than optimal policy pairs. 

\section{PO-OSGs as Assistance Games}\label{app:posgs-poags}

Partially-Observable Off-Switch Games (PO-OSGs) are special cases of assistance games. Recall that we formally define PO-OSGs in \Cref{df:bosg}. \citet{emmons2024tampering} define partially observable assistance games (POAGs) by the following tuple (with minor notational modifications for ease of comparison with our PO-OSG definition): 
\[(\states, \{\haxns, \raxns\},T,\{\Theta,\util\}, \{\hobs,\robs\},\obsdist,\statedist,\gamma)\]
$\states$ is a set of states, $\haxns$ and $\raxns$ are human and assistant action sets, $T: \states \times \haxns \times \raxns \to \Delta(\states)$ is a transition function, $\Theta$ is a set of utility parameters describing the human's possible preferences, $\util: \states \times \haxns \times \raxns \times \Theta \to \R$ is a shared utility function, $\hobs$ and $\robs$ are human and assistant observation sets, $\obsdist: \states \times \haxns \times \raxns \to \Delta(\hobs \times \robs)$ is a conditional observation distribution, $\statedist \in \Delta(\states \times \Theta)$ is an initial distribution over states and utility parameters, and $\gamma \in [0,1]$ is a discount factor.

We can present a PO-OSG $(\states, (\hobs, \robs, \obsdist), \statedist, \util)$ as a POAG instead. In PO-OSGs, we roll the human's preference parameters $\Theta$ into $\states$ and $\hobs$ to capture the fact that the human knows her own preferences but the assistant may not. So the corresponding POAG has of states $\states_2$, human observations $\hobs_2$, and preference parameters $\Theta$ such that $\states = \states_2 \times \Theta$ and $\hobs = \hobs_2 \times \Theta$. $\raxns$ and $\robs$ stay the same in the PO-OSG and POAG presentations of the game, with $\raxns=\{a, w(a), \text{OFF}\}$. In PO-OSGs without communication, the transition function $T$ is unimportant, as there is only one time step in the game. With communication, $T$ intuitively induces a transition such that the new state allows both agents to observe the other agent's message. $\util$ is the same in the PO-OSG and the POAG, except it does not depend on $\Theta$ in the PO-OSG because $\Theta$ is rolled into $\states$. $\obsdist$ and $\statedist$ are the same, with minor modifications to account for the fact that we rolled $\Theta$ into $\states$ in the PO-OSG. Finally, $\gamma$ is irrelevant when there is no communication, and $\gamma=1$ when there is communication to ensure there is no discounting.

Some of the generalizations of PO-OSGs that we describe as future work in \Cref{sec:conclusion}, such as incorporating longer sequences of interactions, can likely be supported within the POAG framework as well.

\end{document}